\newcommand{\C}{\mathbb{C}}
\newcommand{\R}{\mathbb{R}}
\def\be{\begin{equation}}
\def\ee{\end{equation}}
\def\bequnan{\begin{eqnarray*}}
\def\eequnan{\end{eqnarray*}}
\def\la{\label}
\def\von{\varepsilon}
\def\cD{{\mathcal D}}
\def\cP{{\mathcal P}}
\def\P1{{\mathbf P}}
\def\cQ{{\mathcal Q}}
\def\cH{{\sf H}}
\def\cE{{\mathcal E}}
\def\wt{\widetilde}
\def\({\left(}
\def\){\right)}
\def\s2{ \wt{S_2} }
\def\tQ{\widetilde{Q}}
\def\llangle{\left\langle}
\def\rrangle{\right\rangle}
\def\omede{\omega_\delta}
\def\len{\left\|}
\def\rin{\right\|}
\newcounter{prop}
\newtheorem{theorem}{Theorem}
\newtheorem{proposition}[prop]{Proposition}
\newtheorem{corollary}[prop]{Corollary}
\newtheorem{lemma}[prop]{Lemma}
\theoremstyle{definition}
\newtheorem{remark}{Remark}
\begin{document}

\title[Spectral singularities]{Estimates of solutions of linear Boltzmann equation at large time and spectral singularities}

\author[Romanov]{ROMAN ROMANOV} 
\address{Laboratory of
Quantum Networks, Institute for Physics\\ St.Petersburg
State University, 198504, Saint Petersburg, Russia \\ \and \\
Department of Mathematical Physics, Faculty of
Physics\\ St.Petersburg State University, 198504, Saint Petersburg,
Russia}
\email{morovom@gmail.com}

\begin{abstract}
The spectral analysis of the dissipative
linear transport (Boltzmann) operator with polynomial collision integral by the Sz\H{o}kefalvi-Nagy - Foia\c{s} functional
model is given. An exact estimate for the reminder in the asymptotic of the corresponding evolution semigroup is proved in the isotropic case. In the general case, it is shown that the operator has finitely many eigenvalues and spectral singularities and an absolutely continuous essential spectrum, and an upper estimate for the reminder is established. \end{abstract}

\footnotetext{AMS Subject Classification: 47A45, 35Q20.}

\maketitle

\section{Introduction}

This paper is devoted to study of the large time asymptotics for the linear Boltzmann (transport) equation by methods of the Sz\H{o}kefalvi-Nagy -- Foia\c{s} \cite{Na} functional model. After elimination of a constant absorption term the equation takes the form 
\begin{eqnarray}\la{tr} \frac\partial{\partial t} u_t ( x , \mu ) = - \mu \frac\partial{\partial x} u_t ( x , \mu ) + c ( x ) \int_{ -1 }^1 K ( \mu , \mu^\prime ) u_t ( x , \mu^\prime ) \mathrm{d} \mu^\prime , \\ x \in \R , \mu \in [ -1 , 1 ] .  \nonumber \end{eqnarray}

The notation is explained in Preliminaries. This equation describes, for instance, the neutron transport in a slab of multiplicative medium under appropriate simplifying assumptions. The problem involves two parameters: the local mean number of the
secondary particles per collision $c$, a nonnegative compactly supported function on the real line, and the collision operator $ K \in {\mathbf B} L^2 ( -1 , 1 ) $ which describes the angle distributions of the secondaries. It was first considered in \cite{JLh,LW,LW2} in the case of the isotropic distribution, which corresponds to the kernel $ K  \mu , \mu^\prime ) \equiv {\mbox{const}} $. The result of Lehner and Wing says there exists a finite set of $ \beta_j > 0 $ and finite rank projections $ P_j $ such that for any $ u_0 \in L^2 ( \R \times [ -1 , 1 ] ) $ for all $ \delta > 0 $  \be\la{LWas} u_t = \sum_j e^{ \beta_j t } P_j u_0 + O \left( e^{ \delta t } \right) , \;\; t \to + \infty . \ee

The main problem now is to analyze the reminder. Lehner and Wing \cite{LW2} proved that the reminder decays pointwise for $ x $ from the support of $ c $ for $ u $ from a set of initial data dense in a subspace and erroneously claimed a resolvent estimate \cite[Lemma 6]{LW2} which implies that the reminder is $ \len u \rin O ( \ln t ) $. It turned out \cite{KNR} that the estimate does not hold in general. To the best of our knowledge, no results on precise estimates of the reminder in the $ L^2 $--norm have appeared since then, and the later work in the field dealt with other types of transport operator. In the present paper we analyze the structure of the reminder and, in particular, give power upper estimates in the case of a polynomial collision integral, and precise estimates in the isotropic case. 

In terms of the generator, the problem is about the structure of its essential spectrum. Our first main result (Theorem 1) can be stated as follows (the formulation of the theorem in the body of the paper is slightly more detailed).

\medskip

\textbf{Theorem.} {\it Let $ L $ be the operator in $ H = L^2 ( \R \times [ -1 , 1 ] ) $ corresponding to the equation (\ref{tr}) in the sense that $ e^{ itL } u_0 = u_t $ for $ u_0 \in H $. Assume that the kernel $ K $ is polynomial, and the operator $ K \ge 0 $. Then $ L $ is similar to an orthogonal sum of three operators, $ L_1 $, $ L_2 $ and $ L_d $, such that

1. $ L_d $ is a finite rank operator;

2. $ L_2 $ is an absolutely continuous (a. c.) operator with spectrum of finite multiplicity;

3. $ L_1 $ is the orthogonal sum of infinitely many copies of the operator of multiplication by the independent variable in $ L^2 ( \R ) $. 

The operator $ L $ has at most finitely many spectral singularities. All the singularities are of at most finite power order.}

\medskip

\textbf{Corollary.} {\it $ \mathrm{(i)} $ There exist finite $ l , n $ such that the group $ e^{ itL } $ satisfies \be\la{ta} e^{ itL } =  \sum_{ i = 1 }^l e^{ -i\lambda_j t } P_j  + O \left( t^n \right) , \;\; t \to + \infty . \ee Here $ \lambda_j \in \C_+ $, the $ O $ refers to the operator norm, and $ P_j $, $ j \le l $, are finite rank operators. 

$ \mathrm{(ii)} $ $ \sup_{ t > 0 } \len e^{ itL } u \rin $ is
finite for any $ u $ from a dense set in $ \cap_j \ker P_j $}

In the previous paper \cite{KNR} we established assertions 2 and 3 of the Theorem in the isotropic case by a different method. The finiteness of the discrete spectrum in the anisotropic case also appears to be a new result. The argument used for that in \cite{JLh, LW} exploits some sign definiteness property of the bordered resolvent discovered by Lehner and Wing, which is specific for the isotropic problem, and thus cannot be applied in the general case. 

To explain the assertions of the Theorem, recall \cite{Pav} that the invariant subspace of a dissipative operator corresponding to the essential spectrum in general is a sum of the a. c. subspace and an invariant subspace, $ H_s $, corresponding to a sort of singular spectrum. The simplest example of the subspace $ H_s $ being non-trivial is given by the Volterra operator. Assertion 2 says that in the situation under consideration $ H_s $ is trivial. This implies claim (ii) in the Corollary by an abstract theorem. The finite multiplicity of $ L_2 $ means that the linear set of data for which the reminder in (\ref{ta}) may actually grow is, in a sense, thin. The notion of spectral singularity comes from the Sz\H{o}kefalvi-Nagy -- Foia\c{s} criterion \cite{Na}, according to which a dissipative operator is similar to a self-adjoint one if, and only if, its resolvent, $ R ( z ) $, in the upper half plane satisfies $ \| R ( z ) \| \le C \( \Im z \)^{ -1 } $. By definition, spectral singularities are those points on the real axis at which the resolvent of the absolutely continuous component of the operator grows faster than $ \( \Im z \)^{ -1 } $. In applications, spectral singularities were first discovered and studied for the Schr\"{o}dinger operator with a complex potential by Na\u\i mark \cite{Nai} and later analyzed by means of the Nagy -- Foia\c{s} functional model by Pavlov (see \cite{PavEn} and references therein).

Our second main result concerns the isotropic case. In the isotropic case the operator can only have a single spectral singularity located at zero \cite{KNR,LW2}. The singularity does occur. Namely, if we denote by $ \cE $ the set of $c $'s for which there is a spectral singularity, then in \cite{KNR} we showed that for any nonzero compactly supported $ c \in L^\infty $ the function $\varkappa c $, $ \varkappa > 0 $, belongs to $ \cE $ for an infinite discrete set of values of the constant $ \varkappa $. Theorem 2 in the present paper says that this singularity is either logarithmic or of the first order and gives asymptotics of the inverse of the characteristic function of the operator at the singularity. It implies the following assertion (Corollary \ref{semigroest} in the main text). 

\medskip
 
\textbf{Proposition.} {\it In the case of isotropic scattering, let $ c \in \cE $, and let \[ Z_t = e^{ itL } -  \sum_{ i = 1 }^l e^{ \beta_j t } P_j \] in the notation of (\ref{LWas}). Then \[ \len Z_t \rin  \le C ( 1 + t ) \] for all $ t > 0 $, and one of the following alternatives takes place.
 
$ \mathrm{(i)} $ For any sufficiently small $
\von > 0 $ there exists a $ u \in H $ such that
\[ \len Z_t u \rin = t^{ 1 - \von } ( 1 + o
( 1 ) ), \hspace{.5cm} t \to + \infty . \]

$ \mathrm{(ii)} $ The estimate $$ \len Z_t 
\rin \le C \ln t $$ holds for all $ t \ge 2 $, and is exact in the sense that for any
sufficiently small $ \von > 0 $ there exists a $ u \in H $
such that $$ \len Z_t u \rin = \( \ln t \)^{ 1 - \von } ( 1
+ o ( 1 ) ), \hspace{.5cm} t \to + \infty . $$

The alternative $ (i) $ takes place if, and only if, there is an eigenfunction of the integral operator in $ L^2 ( \R ) $ with the kernel $ \frac 12 \sqrt {c ( x )} \ln | x - y | \sqrt {c ( y )} $ with the eigenvalue $ -1 $ orthogonal to the vector $ \sqrt c $.}
 
The proof of Theorem 1 is based on the analysis of the characteristic function of the operator. The characteristic function, $ S $, of a dissipative operator $ L $ is a
contractive analytic operator function in the upper half plane
defined in terms of the resolvent of $ L^* $. In the problem under consideration, the characteristic function is analytic one the real axis except at point $ z = 0 $. This implies  that all non-zero spectral singularities correspond to poles of the characteristic function. The main problem is to analyze the behaviour of the characteristic function at $ z = 0 $.  It turns out that although $ z = 0 $ is not an isolated singularity, $ S^{ -1 } ( z ) $ admits power estimate at it. These assertions combined with the asymptotics of $ S ( z ) $ at infinity imply the absolute continuity of the spectrum. The splitting of the
absolutely continuous component is obtained by application of an
abstract construction of invariant subspaces of
operators with absolutely continuous spectrum suggested in \cite{KNR}. 
The proof of Theorem 2 comes from analysis of the remainder in the asymptotics of the characteristic function.

The structure of the paper is the following. In section 2 we give
a brief description of the abstract construction of separation of the absolutely continuous subspace from \cite{KNR}. Sections 3 and 4 are devoted to proofs of Theorems 1 and 2, respectively. The estimate for the semigroup resulting from Theorem 2 is given in corollary \ref{semigroest}.

\subsection{Notation and Preliminaries}

The following notation is used throughout the paper.

$ \| \cdot \|_2 $ -- the Hilbert-Schmidt norm of operators.

$\C_\pm = \{ \pm z \colon \Im z > 0 \} $; $\omede (z) = \{ z^\prime \in \C_+ \colon |z^\prime - z | \le \delta \} $,

$( X, Y ) $ -- the angle between subspaces $ X , Y $ of a Hilbert space.

For a closed operator $A$ on a Hilbert space $H$

$\sigma_+ ( A ) =
\sigma (A) \bigcap \C_+$,

$ \sigma_{ess} ( A ) $ -- the essential spectrum of $ A $; by definition, $ \sigma_{ess} ( A ) $ is the complement in $ \sigma ( A ) $ of the set of isolated points $ z \in \sigma ( A ) $ such that the corresponding Riesz projection is a finite rank operator;

A subspace
$ \cH \subset H $ is called an \textit{invariant subspace of} $A$ if $ \overline{( A
-\lambda )^{-1} \cH } = \cH $ for all $\lambda \in \rho (L)$. With
that definition, if $ \cH $ is an invariant subspace and $ A $ is an operator with a bounded imaginary part, then $ A \(\cD ( A ) \bigcap \cH \) \subset \cH $ \cite{KNR}, hence the restriction $ A_\cH $ of $ A $ to $ \cH $ with the
domain $ \cD (L) \bigcap \cH $ is a closed operator in $ \cH $;

$ H_{ess} ( A ) \colon = \bigcap \ker \cP_d
$ where $ \cP_d $ ranges over the Riesz projections for points of $ \sigma_+ ( A ) $. The subspace $ H_{ess} ( A ) $ is an invariant subspace of $ A $. We write $ H_{ess} $ for $ H_{ess} ( A ) $ when it is clear which operator the notation refers to. The same convention applies to the subspace $ H_{ac} ( A ) $ defined below; 

A subspace $ {\mathcal J} \subset H
$ is called generating if $ H = \bigvee_{ \lambda \in \rho
( A ) } \( A - \lambda \)^{-1}{\mathcal J}$. The \textit{
multiplicity (of the spectrum)} of the operator $ A $ is the number
$m ( A ) = \inf \dim \mathcal{N} $, where $\mathcal N $ ranges over the
generating subspaces of $A$. 

Given a Hlbert space $ E $, $ H^2_\pm (E) $ stand for the Hardy
classes of $E$-valued functions $f$ analytic in $ \C_\pm $,
respectively, and satisfying $\sup_{ \von
> 0 } \int_\R \len f ( k \pm i\von ) \rin^2_E dk < \infty $. The classes $ H^2_\pm (E) $ are naturally identified with subspaces in $L^2( \R , E)$ comprised by the boundary values of their elements on the real axis. 

Let $L_0 $ be a selfadjoint operator, $V \ge 0 $ a bounded operator, $ L =
L_0 +i V $. For $ z \in \C_+ $ define the operator
\be\la{Qdef} Q ( z ) = i \sqrt
V \( L_0 - z \)^{-1} \sqrt V  . \ee It satisfies $ \Re Q ( z ) \le 0 $. 
A version of the Weyl theorem on relatively compact perturbations holds.

\medskip 

\textbf{Lemma. }(Weyl Theorem) \textit{If $ Q ( z ) $ is a compact operator (at
least at one point $ z \in \C_+ $ and then at all points) then $
\sigma_{ess} ( L ) = \sigma_{ess} ( L_0 ) $, $\sigma_+ ( L ) = \{ z \in \C_+ : \ker ( I + Q ( z ) ) \ne \{ 0 \} \} $.}

Throughout, we use the same notation, $ Q ( z ) $, for the restriction of the operator $ Q (z ) $ to the subspace $ \overline{\mbox{Ran} \, V} $.

Let $ \Phi : \C_+ \to {\bf B} ( E ) $, $E$ being a Hilbert space, be
a bounded analytic operator - function. A scalar function $ m ( z
) \not \equiv 0 $ in $ \C_+ $ is called {\it a scalar multiple}
for $\Phi $ if there exists a bounded analytic operator - function
$\Omega ( z ) $ in $ \C_+ $ such that $ m ( z ) I = \Phi ( z ) \Omega
( z ) = \Omega ( z ) \Phi ( z ) $ for all $ z \in \C_+ $ \cite{N}. A bounded analytic function $ \Phi : \C_+ \to {\bf B} ( E ) $ is
called {\it outer} if $ \overline { \Phi H^2_+ ( E ) } = H^2_+ (
E ) $. Any contractive analytic function $ \Phi : \C_+ \to {\bf B} ( E ) $ admits the {\it canonical
factorization} in a product of two contractive analytic $ {\bf B}
( E ) $ - valued functions of the form $ \Phi = \Phi_i \Phi_e $ where $ \Phi_e
$ is an outer function, and $ \Phi_i ( k ) $ is isometric for a.e. $
k \in \R $. 

\section{Absolutely continuous subspace}

Let $L$ be a maximal dissipative operator in a Hilbert space $ H $
with a bounded imaginary part $ V = \Im L $ such that $
\sigma_{ess} (L) \subset \R $. The operator $ L $ can be represented as an orthogonal sum of a selfadjoint operator and a completely non-selfadjoint one \cite{Na}. The operators in this sum are called the {\it selfadjoint} and {\it completely non-selfadjoint} parts of $ L $, respectively. 
 
\textit{The
absolutely continuous subspace $H_{ac} ( L ) \subset H $} of the
operator $L $ \cite{Pav,Sachn,Tikhang} is the closure of
the linear set $\wt{H_{ac}^w} $ of weak smooth vectors: \begin{eqnarray*} \label{defpla} & H_{ac} ( L ) \stackrel{ \mbox\small{def}}{ =
} \mathop{Clos} \wt{H_{ac}^w} , & \\ \nonumber &
\wt{H_{ac}^w} \stackrel{ \mbox\small{def}}{ = } \left\{
\begin{array}{cl} u \in H \colon & \( L - z \)^{ -1 } u \mbox{ is analytic in }
\C \setminus \R ,
\\ & \llangle \( L - z \)^{ -1 } u , v \rrangle_
\pm \in H^2_\pm \mbox{ for all } v \in H . \\
\end{array} \right\}.
\end{eqnarray*} 

We call the operator $L$ \textit{absolutely
continuous (a. c.)} if $ H = H_{ac} ( L ) $. Motivations and analysis of the definition of the a. c. subspace from various viewpoints including scattering theory can be found in \cite{N,Pavli,PavEn,Nab,Sachn,Tikhang}. One of them is that $ H_{ac} (L) $ is the minimal invariant subspace of $ L $ containing all the
invariant subspaces $ X $ of it such that the restriction $ \left.
L \right|_X $ is similar to an a. c. selfadjoint operator $ A_X $. If $ L $ is completely non-selfadjoint then the subspace $ H_{ac} ( L ) $ coincides with the invariant subspace of $L$ corresponding to the canonical factorization of its characteristic function (to be defined shortly) in the sense that the characteristic function of $ \left. L \right|_{ H_{ac} } $ coincides with the pure part of the outer factor of $ S $ (see \cite{N,equival}). Notice that, obviously, $ H_{ac} ( L ) \subset H_{ess} ( L ) $. 

Let $ E $ stand for the subspace $ \overline{{\rm Ran }V} \subset H $. The characteristic
function $ S(z): \; E\to E $, $ z\in \C_+$, of the operator $L$ is
defined by the formula $$ S(z) = I + 2 i \sqrt V \( L^* - z
\)^{-1} \sqrt{V}. $$ This function is analytic and contractive in
$ \C_+ $. It has nontangential boundary values, $ S(k) \equiv S (
k + i 0 ) $, on the real axis in the strong sense for a.e. $ k\in
\R $. For $ z \in \C_+
\bigcap \rho (L)$ the operator $ S ( z ) $ is boundedly invertible
on $ E $. It can easily be derived by direct calculation that  
\be\la{Char} S ( z ) = \frac{I + Q(z)}{I - Q ( z ) },
\ \ z \in \C_+ . \ee

In the next section we will need the following result to show that the operator corresponding to (\ref{tr}) satisfies $ H_{ess} = H_{ac} $.

\begin{lemma}
\label{abscon} \cite{Na,Pav} Let the selfdjoint part of the operator $ L $ be absolutely continuous. If the characteristic function $ S $
admits a scalar multiple, $ d $, of the form $ d = b d_e $ with $ d_e $ outer and $ b $ a
Blaschke product, then the invariant subspace $ H_d $ of the
operator $ L $ spanned by its root vectors corresponding to
nonreal eigenvalues is complementary to $ H_{ac} ( L ) $, $ \overline {H_{ac} ( L ) \dot{+} H_d } = H $.
\end{lemma}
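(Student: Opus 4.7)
The plan is to reduce to the completely non-selfadjoint case and then use the Sz\H{o}kefalvi-Nagy--Foia\c{s} functional model to identify the singular part of the invariant subspace of $L$ with the closed span $\overline{H_d}$, by forcing the inner factor of $S$ to be a Blaschke--Potapov product.

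First, I would split $L = L_s \oplus L_{cns}$ into its selfadjoint and completely non-selfadjoint parts. By hypothesis $L_s$ is absolutely continuous, so its entire reducing subspace is contained in $H_{ac}(L)$; also $H_d$ lies in the space of $L_{cns}$, because all nonreal eigenvalues belong to the completely non-selfadjoint part. Hence it suffices to establish the conclusion for $L_{cns}$, whose characteristic function is again $S$. From now on I assume $L$ itself is completely non-selfadjoint.

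Next, consider the canonical factorization $S = S_i S_e$ of the characteristic function. As recalled in the paragraph preceding the lemma, $H_{ac}(L)$ is the invariant subspace corresponding to the outer factor $S_e$ in the functional model, so its (non-orthogonal, model-sense) complement is the invariant subspace associated with the inner factor $S_i$. Factor the inner function further as $S_i = B \cdot J$, where $B$ is a Blaschke--Potapov product built from the zeros of $\det S_i$ in $\C_+$ (equivalently, the nonreal eigenvalues of $L$) and $J$ is a singular inner function. A standard fact of Nagy--Foia\c{s} theory identifies the model subspace associated with $B$ with $\overline{H_d}$, so showing $J \equiv I$ will complete the proof.

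The decisive step is to rule out a nontrivial $J$ using the scalar multiple. From $\Omega S = S \Omega = d\cdot I$ with $d = b d_e$, inserting the canonical factorization gives $(B J)(S_e \Omega) = b d_e \cdot I$; by uniqueness of the inner--outer factorization of a scalar function multiplied by the identity, the inner factor $B J$ must divide $b\cdot I$ on the Blaschke side while the singular inner part of $B J$, namely $J$, must divide a singular inner scalar multiple of $I$. Since $d = b d_e$ has no singular inner factor, this forces $J \equiv I$, so $S_i = B$. Hence the model complement of $H_{ac}$ coincides with $\overline{H_d}$, yielding $\overline{H_{ac}\,\dot{+}\,H_d} = H$.

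The main obstacle I anticipate is the precise bookkeeping of the divisibility argument for the operator-valued inner function $S_i$ against the \emph{scalar} multiple $d$; one has to appeal to the uniqueness of the canonical factorization and to the characterization of Blaschke--Potapov factors via zeros of $\det$, together with the fact that a singular inner scalar multiple of an operator inner function cannot be absorbed into an outer factor. Everything else is essentially a rearrangement of standard results from the Nagy--Foia\c{s} and Pavlov theory cited in the lemma.
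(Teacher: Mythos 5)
First, a point of reference: the paper does not prove this lemma at all --- it is imported verbatim from Sz.-Nagy--Foia\c{s} and Pavlov's separation-conditions paper, so there is no in-text argument to compare yours against. Your sketch does follow the route of those references: reduce to the completely non-selfadjoint part, pass to the functional model, and use the scalar multiple to show that the inner factor $S_i$ is a Blaschke--Potapov product whose model subspace is $\overline{H_d}$. Those steps are sound in substance, with one caveat worth recording: the factorization $S_i = B\cdot J$ into a Blaschke--Potapov product times a singular inner function is \emph{not} available for a general operator-valued inner function; it exists here only because $S_i$ inherits a scalar multiple from $S$, so even this preliminary step already consumes the hypothesis.

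The genuine gap is the sentence in which you declare that the invariant subspace attached to $S_e$ (namely $H_{ac}(L)$) and the one attached to $S_i$ are ``complementary'' in the model. The density $\overline{N_i + N_e} = H$ is not a formal consequence of the canonical factorization $S = S_i S_e$: for a completely non-selfadjoint dissipative operator the two invariant subspaces associated with the inner and outer factors need not span the space, and establishing that they do is precisely the content of the Naboko--Pavlov separation theorems. Their proofs use the scalar multiple in an essential way (roughly, through the duality between the singular and absolutely continuous subspaces of $L$ and $L^*$, where $\Omega$ is used to separate the adjoint pair), whereas in your write-up the scalar multiple is spent only on killing the singular inner factor $J$. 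As it stands, the statement you are asked to prove --- complementarity --- is assumed in step two and then merely re-labelled once $S_i=B$ is identified; you still owe the argument that $\overline{H_{ac}+N_i}=H$ before identifying $N_i$ with $\overline{H_d}$. (By contrast, the directness $H_{ac}(L)\cap H_d=\{0\}$ is the easy half: $H_{ac}(L)\subset H_{ess}(L)=\bigcap\ker\cP_d$, while any nonzero vector of $H_d$ has a nonzero Riesz projection onto some root subspace.)
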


Let $ S = S_i S_e $ be the canonical factorization of the characteristic function.

A point $ k \in \R $ is called a {\it proper point of the operator
L} if $$\sup_{ z \in \omede ( k ) } \len S_e^{-1} ( z ) \rin <
\infty \, \mbox{ for some  } \, \delta > 0 . $$ We call a point $
k \in \R $ a {\it spectral singularity} if is not proper. This
definition, first suggested in \cite{Pav}, can be shown to be
equivalent to the one described in Introduction, see
\cite{Na,Pav}.

\medskip

\textbf{Proposition (Nagy - Foia\c{s} criterion \cite{Na})}. \textit{The operator $ \left.
L \right|_{ H_{ac} } $ is similar to a self-adjoint operator if and
only if $ \mbox{ess sup}_{k \in \R } \len S^{-1} ( k ) \rin <
\infty $.}

This criterion implies that $ \left. L \right|_{ H_{ac} } $ is not
similar to a self-adjoint operator if it has spectral
singularities, since for an outer function, $ \Theta $, the function $ \Theta^{-1} ( z ) $ is bounded in $ \C_+ $ if, and only if, $ \mbox{ess sup}_{k \in \R } \len \Theta^{-1} ( k ) \rin $ is finite.

A point $ k_0 \in \R $ is said to be a {\it spectral singularity
of at most power order} if there exists a $ p > 0 $ such that \be\la{Sp} \len S^{ -1 } ( k ) \rin \le C \left| k - k_0 \right|^{ -p } \ee for a. e. $ k $ in a vicinity of $ k_0 $ on the
real axis, a {\it spectral singularity
of order $ p > 0 $ in the strict sense}, if (\ref{Sp}) is satisfied, and for some nonzero $ e_0 \in E $ and $ C_1 > 0 $ \[ \len
S ( k ) e_0 \rin_E \le C_1 \left| k - k_0 \right|^p \] for a. e. $ k $.

We now sketch the construction of splitting of $ H_{ac} $ into a sum of invariant subspaces of  $ L $ from \cite{KNR}. The subspaces obtained make nonzero angle and have the property that the restriction of the operator to one of them is similar to a self-adjoint operator while the other contains all the information about the spectral singularities. The result is formulated in Lemma \ref{abstrsep}. The detailed proofs can be found in \cite{KNR}. The reader not interested in the construction may prefer to take Lemma \ref{abstrsep} for a granted existence result, and proceed to the next section. 

Consider the function $ \Delta ( k ) = I - S^* ( k ) S ( k )
$. Define the subspace ${\mathcal X} \subset L^2 (\R; E ) $ to be the closure of $ \Delta L^2 (\R; E ) $. Let $A_0$
be the operator of multiplication by the independent variable in
$ {\mathcal X } $.

\begin{proposition}\cite[Theorem 4, Corollary 1]{N} \cite{Pavli} \cite[Theorem 2]{equival} Let $ L^\prime $ be the completely non-selfadjoint part of $ L $. Then there exists a bounded operator $ W \colon {\mathcal X } \to H $ such that \be \la{inter} \( L - z \)^{-1} W \ =\ W \(
A_0 - z \)^{-1} , \; z \in \rho ( L ) , \ee $ \overline{\mathrm{Ran} W } = H_{ac} ( L^\prime ) $, and for any $ g \in {\mathcal X} $ \be\la{PK} \| S 
g \| \le \len W g \rin_H \le \| g \| , \ee the norms of $ S g $ and $ g $ being in $ L^2 (\R, E ) $.
\end{proposition}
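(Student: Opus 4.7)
The plan is to pass to the Sz.-Nagy--Foia\c{s} model of the completely non-selfadjoint part $L'$, define $W$ by an explicit formula on the model side, and then translate the conclusions back. In the Pavlov form of the model, $L'$ is unitarily equivalent to the compression to a subspace of $L^2(\R, E) \oplus \overline{\Delta L^2(\R, E)}$ of the operator of multiplication by the independent variable, the relevant subspace being determined by the characteristic function $S$. Under this identification, the a.c. subspace $H_{ac}(L')$ corresponds to the image of the second component $\mathcal X$; this is the content of the identification of $H_{ac}$ with the canonical factorization of $S$ that was recalled immediately before the Proposition.

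In this realization, I would define $W : \mathcal X \to H$ by sending $g$ to the equivalence class of the pair $(0, g)$ in the model space. The intertwining relation (\ref{inter}) is then essentially immediate: both $(L - z)^{-1} W g$ and $W (A_0 - z)^{-1} g$ correspond to the pair $(0, g(k)/(k - z))$, the correction coming from projecting onto the model subspace vanishing because it lies in the $H^2_\pm$ annihilator dictated by analyticity of $(k-z)^{-1}$ in the appropriate half plane. The same construction delivers the upper bound $\len Wg \rin \le \|g\|$ for free, as the model norm on a pair $(0, g)$ is manifestly at most the $L^2$-norm of $g$.

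For the lower bound $\|S g\| \le \len W g \rin$ I would compute $\len Wg \rin^2$ as the infimum of $\|h\|^2 + \|f\|^2$ over pairs $(h, f)$ representing the same class as $(0, g)$ modulo the model annihilator, which in the standard presentation is generated by $\{(S\varphi, \Delta \varphi) : \varphi \in H^2_+(E)\}$. A direct minimization, using the pointwise identity $S^* S + \Delta = I$ on the real axis, reduces this infimum to $\|g\|^2 - \|P_{H^2_+} S^* g\|^2$, which is bounded below by $\|Sg\|^2$ after one more application of the same identity. The density $\overline{\mathrm{Ran}\, W} = H_{ac}(L')$ follows because the vectors $(A_0 - z)^{-1} g$ span $\mathcal X$ as $z$ ranges over $\C \setminus \R$ and $g$ over a dense subset of $\mathcal X$, and applying $W$ together with (\ref{inter}) transports this to a dense set of resolvent-generated vectors in the image.

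The main obstacle, and the part that is genuinely substantial in the cited references \cite{N,Pavli,equival}, is the identification $\overline{\mathrm{Ran}\, W} = H_{ac}(L')$ rather than some proper invariant subspace of it. This rests on the nontrivial equivalence between the definition of the a.c. subspace through smooth vectors used in the excerpt and the invariant subspace of the model corresponding to the outer factor of $S$ (the image of $\mathcal X$ under the natural embedding). Once that equivalence is in hand, the remaining assertions are a bookkeeping exercise in the model.
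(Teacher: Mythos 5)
The paper offers no proof of this Proposition: it is imported from \cite{N}, \cite{Pavli} and \cite{equival}, with only the remark that $W$ is the projection onto the orthogonal complement of the incoming and outgoing subspaces restricted to the residual part of the dilation. Your construction of $W g$ as the class of the pair $(0,g)$ in the two-component model space is exactly that operator, and your outline of the intertwining relation, the two-sided estimate (\ref{PK}) and the identification of $\overline{\mathrm{Ran}\, W}$ with $H_{ac}(L^\prime)$ follows the same route as the cited references, with the genuinely substantial step (the equivalence of the weak-smooth-vector definition of $H_{ac}$ with the model-theoretic one) correctly delegated to \cite{equival} rather than re-proved.
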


For a reader familiar with the functional model we would like to notice that in a spectral representation of the selfadjoint dilation of $ L^\prime $  
the operator $ W $ essentially coincides with the restriction of the projection on orthogonal complement of incoming and outgoing subspaces to the residual part of the dilation. For others, we just say that the operator $ W $ is constructed, in a sense, explicitly from the operator $ L $ via its characteristic function. 

Define $D ( k ) = S^* ( k ) S ( k ) $, $ k \in \R $. Given a $ \beta \in ( 0, 1 ) $ let $ X_1 ( k ) $, $ X_2 ( k )
$ be the spectral subspaces of the self-adjoint operator $ D ( k )
$ for the intervals $ [ 0, \beta^2 )$, $[\beta^2 , 1 ) $,
respectively. Then, the subspaces $ X_{1,2} = \{ f \in L^2 ( \R, E ):\; f \in
X_{1,2} ( k ) \mbox { for a.e. } k \in \R \} \subset {\mathcal X}  $ reduce the operator $ A_0 $, $ {\mathcal X} = X_1 \oplus X_2 $. By the intertwining relation (\ref{inter}), $\cH_{1,2} = \overline{W X_{1,2}}$ are invariant subspaces of the operator
$L$. As is clear from (\ref{PK}), the operator $ L_2^\beta = \left. L \right|_{ \cH_2^\beta }$ is similar to the restriction of $A_0 $ to $ X_2 $ because the restriction of $ W $ to $ X_2 $ is boundedly invertible, $ \len W u \rin \ge \beta \len u \rin $ for all $ u \in X_2 $.  Let us estimate the angle between $ \cH_1 $ and $ \cH_2 $. For any nonzero $ u \in X_2 $, $ v \in X_1 $ on account of (\ref{PK}) we have \[ \| W u \| \le \| u \| \le \frac 1\beta \len S u \rin \le  \frac 1\beta \len S ( u - v ) \rin \le \frac 1\beta \len W ( u - v ) \rin \le \frac 1\beta \len W u \rin \sin ( \cH_1^\beta, \cH_2^\beta ) . \] Here we have used the fact that $ S u \perp S v $ for our choice of subspaces $ X_{ 1,2 } $. The last inequality in the chain is just the definition of the angle between subspaces of a Hilbert space. Thus, $ \sin ( \cH_1^\beta, \cH_2^\beta ) \ge \beta $. 

\begin{lemma}\cite[Corollary 2.8]{KNR}\la{abstrsep} Suppose that $ S $ is norm continuous
in the closed upper half plane, $ S ( k ) - I \in {\bf S}^\infty $
for all $ k \in \R $, and $ S ( k ) \to I $ in the operator norm
as $ |k| \to \infty $. Then $ H_{ac} $ can be represented as a
linear sum $ H_{ac} = \cH_1 \dot{+} \cH_2 $ of invariant subspaces
$ \cH_{ 1,2 } $ of the operator $ L $ such that

1. The multiplicity of spectrum of the restriction $ L_1 = \left. L \right|_{
\cH_1 } $ is finite and coincides with the number $ \max_{ k \in \R } \dim
\ker S ( k ) $,

2. $ L_2 = \left. L \right|_{ \cH_2 } $ is similar to the a.c. selfadjoint operator of multiplication by the independent variable restricted to the subspace $ X_2 \subset L^2 ( \R , E ) $ in the construction above.

3. $( \cH_1, \cH_2 ) > 0$.
\end{lemma}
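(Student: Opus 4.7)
The plan is to choose the parameter $\beta\in(0,1)$ in the construction preceding the lemma sufficiently small and verify the three assertions. Items 2 and 3 are already essentially contained in that preamble: the estimate $\len W u \rin \ge \beta \len u \rin$ for $u\in X_2$ together with the intertwining relation (\ref{inter}) shows that $L_2$ is similar to $A_0|_{X_2}$ via the bounded restriction of $W$, and the chain of inequalities displayed just before the lemma gives $\sin(\cH_1,\cH_2)\ge\beta>0$. What remains is the direct sum decomposition of $H_{ac}$ and the multiplicity computation.

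For the direct sum I would restrict attention to the completely non-selfadjoint part $L^\prime$ of $L$, so that by the proposition preceding the lemma one has $\overline{\mathrm{Ran}\,W}=H_{ac}(L^\prime)$. Because $\mathcal{X}=X_1\oplus X_2$ orthogonally and $W$ is bounded, the algebraic sum $W X_1+W X_2=W\mathcal{X}$ is dense in $H_{ac}$. The positivity of the angle from item 3 implies that $\cH_1+\cH_2$ is already closed as an algebraic sum of the closed subspaces $\cH_j$; combined with the density this yields $\cH_1\dot{+}\cH_2=H_{ac}$.

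For the multiplicity, I would use the hypotheses to control the spectrum of $D(k)=S^*(k)S(k)$ in $[0,1)$ uniformly in $k$. From $S(k)-I\in\mathbf{S}^\infty$ one obtains $D(k)-I\in\mathbf{S}^\infty$, so this spectrum consists of isolated eigenvalues of finite multiplicity for each $k$. Norm continuity of $D$ and $D(k)\to I$ as $|k|\to\infty$ give a uniform bound on the total multiplicity of the eigenvalues of $D(k)$ inside $[0,1/2]$, so that $N:=\max_{k\in\R}\dim\ker S(k)$ is finite and attained. Choosing $\beta$ small enough that $(0,\beta^2]$ contains no eigenvalue of $D(k)$ at any $k$ where $\dim\ker D(k)$ attains its maximum, one obtains $\mathrm{ess\,sup}_k \dim X_1(k)=N$; since $A_0|_{X_1}$ is multiplication by the independent variable with these fibers, its multiplicity equals $N$. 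The intertwining relation together with the density of $W X_1$ in $\cH_1$ gives $m(L_1)\le N$.

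The main obstacle is the matching lower bound $m(L_1)\ge N$. Because $W$ is not bounded below on $X_1$ --- the estimate $\len W u \rin \ge \len S u \rin$ from (\ref{PK}) degenerates precisely on $\ker S$ --- generating subspaces of $\cH_1$ cannot be transferred back to $X_1$ by inverting $W$. I would handle this through the Sz\H{o}kefalvi-Nagy--Foia\c{s} functional model: on each fiber the characteristic function of $L_1$ inherits a kernel of dimension $\dim\ker S(k)$, and standard multiplicity theory for the model operator with a prescribed contractive characteristic function forces $m(L_1)\ge N$.
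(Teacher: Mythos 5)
Your proposal follows the paper's proof for most of the statement: the paper likewise takes the subspaces $\cH_{1,2}^\beta$ from the construction preceding the lemma, reads off assertion 3 and the similarity in assertion 2 from the estimates $\| W u\| \ge \beta\| u\|$ on $X_2$ and $\sin (\cH_1^\beta,\cH_2^\beta)\ge\beta$, obtains the decomposition of $H_{ac}$ from density of $\mathrm{Ran}\, W$ plus the positive angle, and gets $m(L_1)\le N$, $N=\max_{k\in\R}\dim\ker S(k)$, from the intertwining relation. One secondary inaccuracy: you arrange for $(0,\beta^2]$ to be free of spectrum of $D(k)$ only at the maximizing values of $k$; at other $k$ the fiber $X_1(k)$ may then pick up small positive eigenvalues of $D(k)$ and have dimension exceeding $N$, which would ruin the upper bound. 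The paper's choice is uniform: continuity, compactness of $S(k)-I$ and $S(k)\to I$ give $\inf_{k}d(k)>0$ for the least positive eigenvalue $d(k)$ of $D(k)$, and one takes $\beta^2\le\inf_k d(k)/2$, so that $X_1(k)=\ker S(k)$ for every $k$. This is easily repaired.

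The genuine gap is the lower bound $m(L_1)\ge N$, which you correctly identify as the main obstacle but resolve by a mechanism that is neither the paper's nor substantiated. The characteristic function of the restriction of $L$ to the invariant subspace $\cH_1$ is only a factor in a factorization of $S$, and there is no reason why $\ker S(k)$ should be inherited in full by that factor rather than shared with the complementary one; so the assertion that the characteristic function of $L_1$ has an $N$-dimensional kernel on each fiber is unjustified. Even granting it, the kernel sits at a boundary point $k\in\R$, not at an interior point of $\C_+$ where a kernel of the characteristic function would produce eigenvectors; for boundary points the fiberwise quantity relevant to the multiplicity of the absolutely continuous part is the rank of $I-S^*(k)S(k)$ together with the degeneracy of $W$, and I know of no ``standard multiplicity theory'' that converts $\dim\ker S(k)$ into a lower bound for $m(L_1)$. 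The paper closes this step by a different route: under the stated hypotheses $\ker W=\{0\}$, so $W$ is an injective intertwiner from $X_1$ onto a dense subspace of $\cH_1$, and it is from this injectivity that the equality $m(L_1)=N$ is deduced in the cited reference. Your argument needs either that fact or a genuine substitute for it; as written, the last paragraph of the proposal does not prove the lower bound.
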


\begin{proof} Let $ d ( k ) $ be the least positive eigenvalue of $ D ( k ) $. The assumptions of continuity and compactness imply that the dimension $ \dim \ker S( k ) $ is bounded above in $ k $, the maiximum of $ \dim \ker S( k ) $ is attained, and $ \inf_{ k \in \R } d ( k ) > 0 $. Pick a positive $ \beta $ such that $ \beta^2 \le \inf_{ k \in \R } d ( k ) / 2 $. The intertwining relation (\ref{inter}) implies that the multiplicity $ m \( L_1^\beta \) $ of the restriction $ L_1^\beta = \left. L \right|_{ \cH_1^\beta }$ is not greater than the maximum of $ \dim \ker S( k ) $. It is easy to see \cite{KNR} that under the imposed assumptions $ \ker W = \{ 0 \} $, and therefore $ m \( L_1^\beta \) $, in fact, equals to the maximum of $ \dim \ker S( k ) $. Thus, the subspaces $ \cH_{ 1,2 }^\beta $ possess the properties 1 through 3. \end{proof}

\section{Transport Operator for Slab: Anysotropic Case}

The phase space of the transport problem for a slab is
$\Gamma = \R \times \Omega ,$ $\Omega \equiv [-1,1]$. The
variables $x\in \R $ and $ \mu \in \Omega $ make sense of the
position and the cosine of the angle between the momentum and the
coordinate axis, respectively. The densities of particles are elements of $ H = L^2(\R \times \Omega )$. Let $ c \in L^\infty (\R ) $ be an a. e.
nonnegative compactly supported function, and $K $ be an integral operator in $ L^2(\Omega )$ with a degenerate nonnegative polynomial kernel, that is, of the form \[ K = \sum_{ i = 1 }^N k_i^2 \llangle \cdot ,  {\mathcal P}_i \rrangle_{  L^2(\Omega ) } {\mathcal P}_i \] where $ {\mathcal P}_i $ are polynomials of unit norm in $ L^2 ( \Omega ) $, and $ k_i  > 0 $ for all $ i \le N $. It is supposed throughout that the constant function is an eigenfunction  of $ K $. The physical meaning of parameters $ c $ and $ K $ is explained in Introduction. The case of isotropic scattering corresponds to the operator $ K $ of the form $ K = \frac 12 \int_\Omega \cdot
\ d \mu^\prime $ ($ N =1 $), where $ d \mu $ stands for the Lebesgue measure
on $ \Omega $.

We do not distinguish between the operator $ K $ and the
operator $ I \otimes K $ in $ L^2(\R\times \Omega ) = L^2(\R )
\otimes L^2( \Omega )$ in our notation. Using the basis of polynomials $ \{ {\mathcal P}_i \} $ one can naturally identify the
range of $ K $ in $ H $ with the space $L^2 (\R , \C^N ) $ of vector-functions
of the variable $x$, $K H = L^2( \R )\otimes {\mathcal L} \{ {\mathcal P}_j \}_1^N \simeq L^2( \R , \C^N ) $.

We assume that the units are chosen so that the absorption cross-section equals to $ 1 $. Let $ v_t \in H $ be the density of particles at the moment $t$, and let $ u_t = v_t e^t $. Then the evolution of the effective density $ u_t $ in the space $ H $ is given by the
solution of the Cauchy problem for the Boltzmann equation (\ref{tr}). Define the one-speed transport operator $L$ in the space $ H $ by the following expression, \be\la{L} L = i\mu \partial_x - i
c(x) K , \ee on the domain $\cD = \{ f \in H:\, (1) f(\cdot ,\mu
)\,\mbox { is absolutely continuous for a.e. } \mu \in \Omega ;\;
(2) $ $ \mu \partial _x f \in H \}$ of the self-adjoint operator $
L_0 = i\mu \partial _x $. The imaginary part of $L$ is obviously bounded, and the group $ e^{ it L } $ is defined for all $ t \in \R $. With this notation, we have \[ u_t = e^{ it L } u_0 . \]
Instead of $L$, it is convenient to deal with the
dissipative operator $ T = L^* $. The spectral analysis of operators $ L $ is reduced to that of operators $ T $ since $ L $ is unitarily equivalent to the operator $ -\wt T $ where $ \wt T = i\mu \partial_x + i
c(x) \wt K $ with $ \wt K = J K J^* $, $ ( J f ) ( x, \mu ) = f (
x, - \mu ) $: we have $ L = J ( - \wt T ) J^* $. 

Let $ V = \Im T $. The subspace $E \equiv \overline{ \mbox
{Ran}\, V} $ is then naturally identified with a subspace in $ L^2 ( \R , \C^N ) $ via the basis of eigenfunctions of the operator $ K $. Let $Q ( z ) =
\left.  i\sqrt V \( L_0 - z \)^{-1} \sqrt V \right|_E $, $ z \in\C_+ $. The resolvent of  $ L_0 $ for $ \Im z > 0 $ has the form \[ \( \( L_0 - z \)^{-1}  f \) ( x , \mu ) = \frac i\mu \left\{ \begin{array}{cc}
{\displaystyle - \int_{ - \infty }^x e^{ - i z( x - s )/\mu } f ( s , \mu ) \mathrm{d} s } , & \mu < 0 \\
\noalign{\vskip8pt} {\displaystyle \int_x^\infty e^{ - i z( x - s )/\mu } f ( s , \mu ) \mathrm{d} s }, & \mu > 0 .\end{array} \right. \]
Plugging this into the definition of $ Q ( z ) $, we find after a change of variable that \be\la{Qany} Q ( z ) = \sum_0^{ 2N } T_j ( z ) \otimes G_j \ee where $ T_j ( z ) $ are integral operators in the space $ L^2 ( \R ) $ of scalar functions with the kernels $ t_j ( x , y ) =  \sqrt{c ( x ) } \( \mbox{sign} ( x - y )\)^j  \, E_j ( - i z | x -
y | ) \sqrt{c ( y ) } $, respectively, restricted to the closures of their ranges, \be\la{Ej} E_j ( s ) = \int_1^\infty e^{-st} t^{ -j-1 } \mathrm{d}t \ee for $\Re s > 0 $, and $ G_j $ are $ N \times N $-matrices with real entries. Notice that the matrix $ G_0 \ne 0 $ by the assumption that the constant is an eigenfunction of the operator $ K $. Since
$c$ is compactly supported, the operator $Q(z)$ is of the
Hilbert - Schmidt class for all $z\in \C_+$. In the following lemma we gather the required properties of the operator $ Q ( z ) $. The superscript $ 0 $ is omitted throughout for $ j = 0 $ so $ E $ stands for the function $ E_0 $. 


\begin{lemma}\la{propertiesQany} 
 $ \mathrm{(i)} $ The function $ Q ( z ) $ admits the
representation \be\la{assyanisotro} Q ( z ) = \ln ( -i z ) \llangle \cdot , \ell \rrangle \ell + B + \Theta
(z) \ee where $ B $ is a Hilbert--Schmidt operator, $ \ell \in E $ is given by \[ \ell = \sqrt c {\mathcal P} , \; {\mathcal P} \stackrel{ \mbox\small{def}}{ =
} \( \begin{array}{c} k_1 {\mathcal P}_1 ( 0 ) \\ k_2 {\mathcal P}_2( 0 ) \\ \cdots \\ k_N {\mathcal P}_N ( 0 ) \end{array} \) \in \C_N , \] and $ \Theta ( z ) $ is an analytic function in ${\mathcal O } \equiv \C
\setminus \{ -it , t > 0 \} $ such that $ \| \Theta ( z ) \| \le C | z \ln z | $ for $ |z| \le 1/2 $. This formula defines an analytical continuation of the function $ Q ( z ) $ from $ \C_+ $ to ${\mathcal O } $.

$ \mathrm{(ii)} $ $\len Q ( z ) \rin \to 0 $ as $
|z| \to \infty $ uniformly in $ \mbox{arg} \, z \in [ 0 , \pi ] $.
\end{lemma}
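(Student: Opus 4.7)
The plan is to exploit the explicit representation (\ref{Qany}) of $Q(z)$ as a sum of tensor products $T_j(z)\otimes G_j$, and to analyze each $T_j(z)$ via the known properties of the exponential integrals $E_j$. Both parts of the lemma reduce to understanding $E_j(-iz|x-y|)$ in two regimes: near $s=0$ (for part (i)) and at infinity in the half-plane $\Re s\ge 0$ (for part (ii)).

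For the expansion step I would recall that $E_0(s)=-\ln s-\gamma+O(s)$ near $0$, and each $E_j$ is analytic on $\C\setminus(-\infty,0]$. The identity $E_j'(s)=-E_{j-1}(s)$ and the values $E_j(0)=1/j$ for $j\ge 1$ then give $E_1(s)=1+s\ln s+O(s)$ and $E_j(s)=1/j+O(s)$ for $j\ge 2$. Since $-iz|x-y|$ lies in the cut plane iff $z\in\mathcal{O}$, this also provides the required analytic continuation of $Q(z)$ to $\mathcal{O}$. Substituting into the kernel of $T_0(z)$ and separating the three contributions yields
$$T_0(z)=-\ln(-iz)\,M_c+B_0+\wt\Theta_0(z),$$
where $M_c=\langle\cdot,\sqrt c\,\rangle\sqrt c$ is rank one, $B_0$ is the Hilbert--Schmidt operator with kernel $\sqrt{c(x)c(y)}(-\ln|x-y|-\gamma)$, and $\|\wt\Theta_0(z)\|_2=O(|z|)$ for $|z|\le 1/2$. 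For $j\ge 1$, $T_j(z)=T_j(0)+\wt\Theta_j(z)$ with $T_j(0)$ Hilbert--Schmidt and $\|\wt\Theta_j(z)\|_2=O(|z\ln z|)$, the logarithm coming from the $s\ln s$ term of $E_1$. Collecting terms in (\ref{Qany}) gives $Q(z)=-\ln(-iz)(M_c\otimes G_0)+B+\Theta(z)$ with $B$ Hilbert--Schmidt and $\|\Theta(z)\|\le C|z\ln z|$.

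The decisive algebraic step is to identify $-M_c\otimes G_0$ with the rank-one operator $\langle\cdot,\ell\rangle\ell$. This requires showing that the matrix $G_0$ itself is rank one with $G_0=-\mathcal P\mathcal P^T$, where $\mathcal P$ is the vector from the statement. Tracing back the derivation of (\ref{Qany}), $G_0$ encodes the coefficient of $E_0(-iz|x-y|)$ in the angular integral that arises from sandwiching $(L_0-z)^{-1}$ between the eigenfunctions $\mathcal P_i$ of $K$. After the substitution $t=1/|\mu|$, this coefficient is obtained by evaluating the polynomial $\mathcal P_i(\mu)\mathcal P_j(\mu)$ at $\mu=0$, producing $\mathcal P_i(0)\mathcal P_j(0)$, with factors $k_ik_j$ coming from $\sqrt K$ and the contributions from the positive and negative parts of the $\mu$-integral adding to give the asserted outer product. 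I expect this identification to be the main technical obstacle of the lemma, since the sign bookkeeping through the resolvent formula and the $(\mathrm{sign}(x-y))^j$ factors is delicate, and one must also confirm that no $j\ge 1$ contributes to the $\ln(-iz)$ coefficient.

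Part (ii) follows by dominated convergence in Hilbert--Schmidt norm. For $\arg z\in[0,\pi]$ one has $\Re(-iz)\ge 0$, so a uniform bound of the form $|E_j(-iz|x-y|)|\le C$ (with a logarithmic correction for $j=0$) holds on $\mathrm{supp}\,c\times\mathrm{supp}\,c$, providing a fixed Hilbert--Schmidt majorant for the kernels of $T_j(z)$. Pointwise, $E_j(-iz|x-y|)\to 0$ as $|z|\to\infty$ for every $|x-y|>0$; therefore $\|T_j(z)\|_2\to 0$ uniformly in the sector, whence $\|Q(z)\|\le\|Q(z)\|_2\to 0$.
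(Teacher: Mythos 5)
Your proposal is correct and follows essentially the same route as the paper: expand the exponential integrals $E_j$ near $s=0$ (your recursion $E_j'=-E_{j-1}$ is equivalent to the paper's integration-by-parts identity $E_j(s)=p_{j-1}(s)e^{-s}+c_js^jE(s)$), read off the $\ln(-iz)$ coefficient from the $j=0$ term via the rank-one structure of $G_0$, and for (ii) combine the bound $|E_j(s)|\le C|s|^{-1}$ on $\Re s\ge 0$ with a near-diagonal/off-diagonal splitting of the kernel, which is exactly the paper's $T_j=T_j^\von+R_j^\von$ decomposition phrased as dominated convergence. Your sign $G_0=-\mathcal P\mathcal P^T$ is the one consistent with the stated formula for $Q(z)$ and with the isotropic kernel $-\frac12\sqrt{c(x)}E(-iz|x-y|)\sqrt{c(y)}$, so your attention to that bookkeeping is well placed.
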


\begin{proof} Integrating by parts, we obtain that $E_j ( s ) = p_{ j - 1 } ( s ) e^{ -s } + c_j s^j E ( s ) $ where $ p_{j-1 } $ is a polynomial of degree $ j - 1 $ and $ c_j = \( -1 \)^j /j! $. The function $E(s)$ admits the representation \cite{LW} \be\label{assy} E(s) = -\ln s - \gamma +\theta (s) , \; \Re s > 0
\ee where $\theta (s) = - \sum_{m=1}^\infty \frac{(-s)^m}{m!m} $
is an entire function and $\gamma $ is the Euler constant. This formula defines an analytical continuation of the function $ E $, and hence of all $ E_j $'s, to $ \C \setminus \{ t \in \R : t < 0 \} $.
Plugging these representations for $ E_j $ into (\ref{Qany}), re-grouping terms according to their behaviour in the $ z $ variable, and taking into account that $ G_0 = \{ k_i k_j {\mathcal P}_i ( 0 ) {\mathcal P}_j (0) \}_{ i, j = 1 }^N $, we obtain the assertion $ \mathrm{(i)} $. To establish the assertion $ \mathrm{(ii)} $, we are going to show that $\len T_j ( z ) \rin_2 \to 0 $ as $
|z| \to \infty $ uniformly in $ \mbox{arg} \, z $ for all $ j $. First, we once integrate by parts in (\ref{Ej}) (in the direction opposite to the one used in  $ \mathrm{(i)} $) to find out that there exists a $ C > 0 $ such that $ | E ( s ) | \le C \left| s \right|^{ -1 } $ for all $ s $ with $ \Re s \ge 0 $. Let $ \chi_\von $ be the indicator function of the interval $ [ 0 , \von ) \subset \R $. Split the operator $ T_j $ into the sum $ T_j = T_j^\von + R_j^\von  $ where $ T_j^\von $ is the integral operator with the kernel $ t_j ( x , y ) \chi_\von ( | x - y | ) $. The estimate on $ E $ we have just found implies that $ \len R_j^\von \rin_2 \le C \left| z\right|^{ -1 } \von^{ -1 } $. On the other hand, by direct inspection of (\ref{Ej}) and (\ref{assy}) we notice that $ \len T_j^\von \rin_2 \le C \von^{ 1/2 } $ for $ j \ge 1 $, and $ \len T_0^\von \rin_2 \le C \von^{ 1/2 } ( 1 + | \ln ( \von |z| ) )| $. Now setting, for instance, $ \von = \left| z \right|^{ - 1/2 } $, we obtain $ \mathrm{(ii)} $.
\end{proof}

Let $ S $ be
the characteristic function of the operator $T$, and define $ \tQ ( z ) $ to be the sum of the first two terms in the right hand side of (\ref{assyanisotro}), so \be\la{tQ} Q ( z ) =  \tQ ( z ) + \Theta ( z ) , \; z \in \C_+ . \ee Fix an arbitrary $ \delta > 0 $ such that $ I - \tQ ( i\delta ) $ is boundedly invertible. Such a $ \delta $ exists since $ \Re Q ( z) \le 0 $ for all $ z \in \C_+ $, and $ \| \Theta ( z ) \| \to 0 $ when $ z \to 0 $.

\begin{proposition}
\la{S0any} The function $ S $ is
analytic on the real axis except at the point $ 0 $, continuous in $ \overline{
\C_+ } $ in the operator norm, and admits analytic continuation from $ \C_+ $ to the set $ \Pi = \{ z \in \C \colon \mbox{arg z } \ne -  \pi /2 , \; | z | \le \delta_0 \} $ for some $ \delta_0 > 0 $. The following asymptotics holds for this analytic continuation in the operator norm uniformly in $ \mbox {arg} \, z $,  
\begin{eqnarray} S ( z ) = S ( 0 ) + \frac 2{ \vartheta_c (1 + \alpha (z)
\vartheta_c )  } \llangle \cdot , \( I - {\tQ}^* (i\delta) \)^{-1} \ell \rrangle \( I -
\tQ (i\delta) \)^{-1} \ell \nonumber \\  + O \( | z \ln z | \) , \; z \in \Pi , \; z \to 0 ,  \label{este0} \\ S(0) = \frac{ I + \tQ ( i \delta ) } { I - \tQ  ( i \delta ) } - \frac 2 {
\vartheta_c } \llangle \cdot , \( I - {\tQ}^* ( i \delta ) \)^{-1} \ell \rrangle \( I - \tQ ( i \delta ) \)^{-1} \ell ,  \label{este} \\ \alpha (z) = - \ln \( - \frac{ i z}\delta \) , \;
\vartheta_c  = \llangle \( I - \tQ ( i \delta ) \)^{-1} \ell, \ell \rrangle \nonumber
.    \end{eqnarray} 
\end{proposition}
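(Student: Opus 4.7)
The plan is to invert $I - Q(z)$ explicitly near $z = 0$ by isolating the rank-one logarithmic singularity and then absorbing the regular remainder $\Theta$, after which the claimed asymptotics of $S$ follow from the representation $S(z) = 2(I - Q(z))^{-1} - I$ implied by (\ref{Char}). Formula (\ref{assyanisotro}) in Lemma \ref{propertiesQany}(i) already supplies the analytic continuation of $Q$ from $\C_+$ to ${\mathcal O}$, and since $\Pi \subset {\mathcal O}$ for $\delta_0 < 1/2$, the only remaining task is to show that $I - Q(z)$ is invertible on $\Pi$ and to extract the leading behavior of the inverse as $z \to 0$.

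First I would invert $I - \tQ(z)$ by the Sherman--Morrison identity. Writing $A = I - \tQ(i\delta)$, the identity $\tQ(z) - \tQ(i\delta) = (\ln(-iz) - \ln \delta)\llangle \cdot, \ell \rrangle \ell = -\alpha(z)\llangle \cdot, \ell \rrangle \ell$ exhibits $\tQ(z) - \tQ(i\delta)$ as rank one, so
\be\la{SMinv}
(I - \tQ(z))^{-1} = A^{-1} - \frac{\alpha(z)}{1 + \alpha(z)\vartheta_c}\,\llangle \cdot, (A^*)^{-1}\ell \rrangle A^{-1}\ell .
\ee
The scalar prefactor tends to $1/\vartheta_c$ as $|\alpha(z)| \to \infty$, so $\|(I - \tQ(z))^{-1}\|$ stays uniformly bounded on a punctured neighborhood of the origin in $\Pi$, uniformly in $\arg z$.

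Next I would absorb the regular piece. Using the factorization
\[ I - Q(z) = (I - \tQ(z))\bigl[I - (I - \tQ(z))^{-1}\Theta(z)\bigr] \]
and the estimate $\|\Theta(z)\| \le C|z\ln z|$ from Lemma \ref{propertiesQany}(i), a Neumann series inverts the bracket for $z \in \Pi$ of sufficiently small modulus and gives
\be\la{Qinverse}
(I - Q(z))^{-1} = (I - \tQ(z))^{-1} + O(|z \ln z|)
\ee
in operator norm. Substituting (\ref{SMinv}) and (\ref{Qinverse}) into $S(z) = 2(I - Q(z))^{-1} - I$, evaluation at $z = 0$ via the rearrangement $2A^{-1} - I = (I - \tQ(i\delta))^{-1}(I + \tQ(i\delta))$ reproduces (\ref{este}), while the elementary identity
\[ \frac{1}{\vartheta_c} - \frac{\alpha(z)}{1 + \alpha(z)\vartheta_c} = \frac{1}{\vartheta_c(1 + \alpha(z)\vartheta_c)} \]
produces the displayed rank-one correction in (\ref{este0}) with remainder $O(|z\ln z|)$.

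Analyticity of $S$ on $\R \setminus \{0\}$ and continuity in $\overline{\C_+}$ then follow from the same ingredients: $\Re Q(z) \le 0$ in $\C_+$ (which passes to boundary values on $\R$ by continuity of the extension from Lemma \ref{propertiesQany}) forces $\|(I - Q(z))^{-1}\| \le 1$ on $\overline{\C_+} \setminus \{0\}$, while continuity of $S$ at $0$ is immediate from (\ref{este0}) since the rank-one correction vanishes in norm as $|\alpha(z)| \to \infty$. The main obstacle is purely algebraic: tracking the Sherman--Morrison bookkeeping so that the numerator $(I - \tQ(i\delta))^{-1}\ell$ and the dual functional $\llangle \cdot, (I - \tQ^*(i\delta))^{-1}\ell\rrangle$ in (\ref{este0})--(\ref{este}) appear with the correct ordering and pairing, together with a subsidiary nondegeneracy check that $\vartheta_c \ne 0$ (implicit in the stated asymptotic), which should follow from the dissipativity of $\tQ(i\delta)$ for the chosen $\delta$.
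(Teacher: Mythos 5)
Your proposal is correct and follows essentially the same route as the paper: decompose $Q=\tQ+\Theta$, invert $I-\tQ(z)$ by the rank-one (Sherman--Morrison) formula, absorb $\Theta$ via the resolvent identity and the bound $\|\Theta(z)\|=O(|z\ln z|)$, and read off the asymptotics from $S=-I+2(I-Q)^{-1}$. The only (immaterial) divergence is the nondegeneracy check: the paper derives $\vartheta_c\ne 0$ by contradiction from the boundedness of $(I-\tQ(z))^{-1}$ near $0$ (a consequence of $\Re Q\le 0$), whereas your suggestion to use the accretivity of $I-\tQ(i\delta)$ directly, giving $\Re\vartheta_c>0$, also works for $\delta$ small.
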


\begin{proof} From (\ref{Char}) we infer that the function $ S $ is analytic on the real axis off $ z = 0 $ since such is the function $ Q ( z ) $. Let us establish the required 
analytic continuation and find an asymptotic at $ 0 $ for the function $ \( I - Q (z) \)^{-1} $. Then the corresponding result for $ S $ reads from (\ref{Char}) rewritten as \be\la{char1any} S(z) = - I + 2 ( I
- Q (z) )^{-1} . \ee We have $ \tQ
(z) = \tQ ( i \delta ) + \alpha (z) \llangle \cdot , \ell \rrangle \ell $. A straightforward
computation gives \begin{eqnarray} \label{tQany} \( I - \tQ (z) \)^{-1} = \( I
- \tQ ( i\delta ) \)^{-1} - \frac { \alpha (z) } { 1 + \alpha (z)
\vartheta_c }\llangle \cdot , \( I - {\tQ}^* ( i \delta ) \)^{-1} \ell \rrangle \nonumber \\ \( I - \tQ ( i \delta ) \)^{-1} \ell . \end{eqnarray}The constant $ \vartheta_c $ is nonzero, for otherwise the norm of $  \( I - \tQ (z) \)^{-1} $ would be unbounded in a vicinity of $ z = 0 $, which is not the case since $ \Re Q ( z) \le 0 $ for all $ z \in \C_+ $.  The right hand side of this formula is
continuous at $ 0 $ in $ \overline{ \C_+ } $, admits analytic extension to the set $ \Pi $ when $ \delta_0 > 0 $ is small enough, and this extension is bounded in norm when $ z $ ranges over $ \Pi $. Hence, the function $ \( I- \tQ
(z) \)^{-1} \Theta (z) $ admits analytic extension to the set $ \Pi $ and is $ O ( |z \ln z | ) $ in that set by Lemma \ref{propertiesQany}. Now, we express $ \( I - Q ( z ) \)^{ -1 } $ via $ \( I - \tQ ( z ) \)^{ -1 } $ through the resolvent identity, \[ \( I - Q(z) \)^{-1} = \( I - \( I- \tQ
(z) \)^{-1} \Theta (z) \)^{ -1 }  \( I -
\tQ (z) \)^{-1} . \]
It shows that the function $\( I - Q (z) \)^{-1} $ also extends analytically to the set $ \Pi $, and its extension admits the representation \bequnan \( I - Q (z) \)^{-1} = \( I - \tQ ( i\delta ) \)^{-1} - \frac { \alpha (z) } { 1 + \alpha (z)
\vartheta_c } \llangle \cdot , \( I - {\tQ}^* ( i \delta ) \)^{-1} \ell \rrangle \\ \( I - \tQ ( i \delta ) \)^{-1} \ell + O \( | z  \ln z | \) , \; z \in \Pi . \eequnan On account of (\ref{char1any}), this asymptotics implies continuity of $ S $ at zero and the formulae (\ref{este0}),(\ref{este}). \end{proof}

The required $ \delta $ can be estimated explicitly in terms of $ c $ and $ K $. 

\begin{remark} Proposition \ref{S0any} holds with any positive $ \delta $ such that \[ \delta \le  \min \left\{ \frac 1{2a} , C_N \frac 1{ a \len K \rin^2  \left| c\right|_1^2 } \right\} \] where $ |c|_1 = \int_\R c(x) dx $, $ a $ is the diameter of the support of the function $ c $, and $ C_N $ is a number constant depending on $ N $ only. 
\end{remark}

\begin{proof} Notice that for $ 0 < s \le 1 $ from the definition of $ E_j $'s we have $ | E_j ( s ) - E_j ( 0 ) | \le c_j  s^{ 1/2 } $ for $ j \ge 1 $, and $ | \theta ( s ) | \le  s $, and that $ \| G_j \| \le C \| K \| $ for all $ j $ with a constant $ C $ depending on $ N $ only. Hence from (\ref{Qany}) and (\ref{assy}) we have $ \len \Theta ( z )  \rin_2^2 \le C \len K \rin^2 \left| c\right|_1^2 \left| 2 a z \right| $ provided that $ 2a |z | \le 1 $, and the assertion follows. \end{proof} 

\begin{theorem} The essential
spectrum of the operator $T$ coincides with the real axis: $\sigma_{ess}(T) =
\R $. The non-real spectrum $\sigma ( T ) \cap \C_+ $ is discrete and consists of at most finitely many eigenvalues. Let $ H_{ess} $ be the invariant subspace
of $ T $ corresponding to the essential spectrum defined in Preliminaries. Then the restriction $\left. T\right|_{H_{ess}}$ is absolutely
continuous, $ H_{ess} = H_{ac} ( T ) $. The operator $ T $ has at most finitely many spectral singularities. If $ z = 0 $ is a spectral singularity then it is of finite power order. All other spectral singularities are of finite power order in the strict sense. There exist invariant subspaces $ \cH_{ 1 , 2} \subset H_{ess} $ forming a nonzero angle and such that $ H_{ess}  = \cH_1 \dot{+} \cH_2 $ and the restrictions $T_{1,2} = \left. T \right|_{ \cH_{ 1,2} }$ satisfy

1. The spectrum of $T_1 $ has finite multiplicity equal to $ \max_{ k_j \in \sigma_0 } \dim \ker S (k_j) $, where $ \sigma_0 $ is the set of spectral singularities. 

2. $T_2$ is similar to a self-adjoint operator with Lebesgue spectrum of infinite multiplicity on the real line.\end{theorem}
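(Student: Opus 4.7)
The plan is to feed the information of Lemma~\ref{propertiesQany} and Proposition~\ref{S0any} into the abstract machinery of Section~2 in four steps.

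\textbf{Essential and non-real spectrum.} Since $Q(z)$ is Hilbert--Schmidt, the Weyl lemma of Preliminaries immediately gives $\sigma_{ess}(T)=\sigma_{ess}(L_0)=\R$ and identifies $\sigma_+(T)$ with $\{z\in\C_+:\ker(I+Q(z))\ne\{0\}\}$. The analytic Fredholm theorem makes this set discrete; Lemma~\ref{propertiesQany}(ii) provides $\|Q(z)\|<1$ for $|z|$ large, ruling out accumulation at infinity. To rule out accumulation at $0$, I would repeat the rank-one analysis from the proof of Proposition~\ref{S0any}, but with $I+\tQ$ in place of $I-\tQ$: this expresses $(I+Q(z))^{-1}$ with a scalar denominator of the form $1-\alpha(z)\vartheta^\prime$, which is bounded away from zero as $z\to 0$ because $\alpha(z)\to\infty$. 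Hence $\sigma_+(T)$ is finite.

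\textbf{Coincidence $H_{ess}=H_{ac}$.} I would apply Lemma~\ref{abscon}. A scalar multiple $d$ of the characteristic function can be taken to be a regularised (Carleman) determinant $d(z)=\det_{2}(I-Q(z)^2)$, using that $Q(z)^2$ is trace class; a commuting cofactor $\Omega(z)$ is produced by the analytic Fredholm calculus combined with the factorisation $S=(I+Q)(I-Q)^{-1}$ in~(\ref{Char}). The function $d$ is bounded and analytic in $\C_+$, and its zeros coincide with the (by step~1 finite) non-real eigenvalues of $T$, each of finite algebraic multiplicity, so the inner factor of its canonical factorisation is a finite Blaschke product. Lemma~\ref{abscon} then yields $H=H_{ac}\dot{+}H_d$ with $H_d$ spanned by the root vectors at non-real eigenvalues; combining this with the definition of $H_{ess}$ forces $H_{ess}=H_{ac}$. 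This is the step I expect to be the main obstacle, because $E$ is infinite-dimensional and $S-I$ is not obviously trace class, so the clean production of a scalar multiple with finite-Blaschke inner factor is the delicate point.

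\textbf{Spectral singularities.} At any $k_0\in\R\setminus\{0\}$ the function $S$ is analytic and $S-I$ is compact, so by analytic Fredholm $S^{-1}$ is meromorphic near $k_0$ with a pole of finite order whenever $S(k_0)$ fails to be invertible; a local Smith-type normal form for holomorphic Fredholm operator functions produces an $e_0\in\ker S(k_0)$ realising the \emph{strict} power order. The number of such points is finite because $\|S(k)-I\|\to 0$ at infinity (Lemma~\ref{propertiesQany}(ii)) and any bounded closed subset of $\R\setminus\{0\}$ carries only finitely many isolated poles. At $z=0$, a Sherman--Morrison inversion of the rank-one term in~(\ref{este0}) gives an expansion of $S^{-1}(z)$ whose worst possible growth is $O(|\ln z|)$, so $z=0$ is either regular or a spectral singularity of finite power order.

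\textbf{Splitting of $H_{ac}$.} The hypotheses of Lemma~\ref{abstrsep} hold: $S$ is norm continuous on $\overline{\C_+}$ (Proposition~\ref{S0any} together with analyticity off $0$), $S(k)-I$ is Hilbert--Schmidt hence compact, and $S(k)\to I$ as $|k|\to\infty$. The lemma then yields $\cH_{1,2}\subset H_{ac}$ with $H_{ac}=\cH_1\dot{+}\cH_2$, nonzero angle, and $T_2=T|_{\cH_2}$ similar to multiplication by the independent variable on $X_2\subset L^2(\R,E)$. The multiplicity of $T_1$ equals $\max_{k\in\R}\dim\ker S(k)$; at any $k\notin\sigma_0$ the outer factor $S_e$ is locally boundedly invertible and $S_i(k)$ is isometric a.e., so $\ker S(k)=\{0\}$ there, and the maximum is attained on $\sigma_0$. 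Finally, because $D(k)=S^*(k)S(k)\to I$ as $|k|\to\infty$, the spectral subspace of $D(k)$ for $[\beta^2,1)$ exhausts the infinite-dimensional $E$ for large $|k|$, which gives infinite multiplicity and therefore Lebesgue spectrum for $T_2$.
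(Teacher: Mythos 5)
Your overall architecture (Weyl theorem; a scalar multiple plus Lemma \ref{abscon} for $H_{ess}=H_{ac}$; Lemma \ref{abstrsep} for the splitting) is the same as the paper's, but the technical core is missing and two of your steps fail. The most serious gap is the behaviour at $z=0$. You propose to control $(I+Q(z))^{-1}$ near $0$ by repeating the rank-one analysis of Proposition \ref{S0any} with $I+\tQ$ in place of $I-\tQ$, and you conclude from a Sherman--Morrison inversion of (\ref{este0}) that the worst growth of $S^{-1}(z)$ is $O(|\ln z|)$. This cannot work. The analysis of $I-\tQ$ rests on the dissipativity inequality $\Re Q(z)\le 0$, which guarantees that $I-\tQ(i\delta)$ is invertible and that $\vartheta_c\ne0$; there is no analogue for $I+\tQ$: the operator $I+\tQ(i\xi)$ may fail to be invertible (its kernel is essentially the subspace $\mathrm N$ of Section 4), and the analogous scalar $\llangle (I+\tQ(i\xi))^{-1}\ell,\ell\rrangle$ can vanish --- it \emph{must} vanish whenever $0$ is a spectral singularity. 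Likewise, (\ref{este0}) expands $S(z)$ around $S(0)$, and $S(0)$ itself may have a nontrivial kernel, so inverting only the rank-one correction proves nothing. Your claimed bound $O(|\ln z|)$ actually contradicts Theorem 2 of the paper: when $\mathrm N\ne\{0\}$ one has $S^{-1}(z)\sim -\frac{2i}{z}M^{-1}P_{\mathrm N}$, a first-order singularity. The paper's proof of non-accumulation of $\sigma_+\cup\sigma_0$ at $0$ and of the power bound there is precisely the part you have not replaced: it introduces the Riesz projection of $I+B$ at $0$, reduces the invertibility of $I+Q(z)$ to a finite determinant $\det(I+PH(z)P)=\sum_j\ln^j(-iz)f_j(z)$, and runs an iterative vanishing argument to obtain $|\det(I+PH(z)P)|\ge C|z|^n$ for some finite $n$.

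Two further gaps. For the scalar multiple, $\det_2(I-Q(z)^2)$ is not bounded in $\C_+$: by Lemma \ref{propertiesQany}(i) the Hilbert--Schmidt norm of $Q(z)$ grows like $|\ln z|$ as $z\to0$, so the standard bound on regularized determinants gives no uniform control; and even granting boundedness, finitely many zeros do not force the inner factor to be a finite Blaschke product --- a singular inner factor has no zeros at all, and excluding it is exactly where the boundary estimates on $S^{-1}$ enter. The paper avoids all of this by writing the scalar multiple down explicitly, $\pi(z)=(z+i)^{-J}\prod(z-k_j)^{m_j}b(z)$, \emph{after} the power bounds at the singularities and the asymptotics at infinity are in hand. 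Finally, for the infinite multiplicity of $T_2$: $D(k)\to I$ at infinity gives $\Delta(k)\to0$ in norm, which says nothing about $\dim X_2(k)$, since $X_2(k)$ lies inside $\overline{\mathrm{Ran}\,\Delta(k)}$ and that range could a priori be finite-dimensional. What is needed, and what the paper proves, is $\mathrm{rank}\,\Delta(k)=\infty$ for every real $k$, which follows from the infinite rank of $\Re Q(k)$ (the logarithmic singularity of its kernel on the diagonal, using $G_0\ne0$). This verification is absent from your argument.
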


\begin{proof} By the Weyl theorem, the essential spectrum of $ T $ coincides with that of  $ L_0 $, so $ \sigma_{ess} ( T ) = \R $. The analyticity of $ Q ( z ) $ on the real axis off $ z = 0 $ implies that $ \sigma_0 $ and $\sigma_+ ( T ) $ do not accumulate to a real $ k \ne 0 $. Then, by Lemma \ref{abstrsep} there exist subspaces $ \cH_{ 1, 2} \subset H_{ac} $ such that $ H_{ac} = \cH_1 \dot{+} \cH_2 $, property 1 is satisfied and $ T_2 = \left. T \right|_{\cH_2} $ is similar to the restriction of the operator of multiplication by the independent variable to the subspace $ \{ f \in L^2 ( \R , E ) \colon f ( k ) \in X_2 ( k ) \} \subset L^2 ( \R , E ) $, $ X_2 ( k ) $ being the range of spectral projection of the operator $ S^* ( k ) S( k ) $ corresponding to the interval $ [ \beta^2 , 1 ) $. Thus, property 2 will be established if we show that $ \mbox{rank}
\Delta ( k ) = \infty $ for all $ k \in \R $. Indeed, it is
obvious from (\ref{Char}) that \be\la{S}  \| S ( k ) f \| < \| f \| \ee for $
f = ( I - Q ( k ) ) \varphi $ with $ \Re Q ( k ) \varphi \ne 0 $.
The operator $ \Re Q ( k ) $ has infinite rank since its integral kernel for any real $ k $ has a logarithmic singularity
at the diagonal by virtue of the fact that the matrix $ G_0 $ in (\ref{Qany}) is non-zero. Hence (\ref{S}) is satisfied on a subspace of $f $'s of infinite dimension. Property 2 is proved.

Let us show that sets $ \sigma_+ ( T ) $ and $ \sigma_0 $ are finite. We will actually show that the points $ z \in \overline{\C_+} $ such that $ \ker \( I + Q ( z ) \) \ne \{ 0 \} $ do not accumulate at $ 0 $. Throughout the rest of the proof, $ \P1 $ is a rank 1 operator independent of $ z $ exact form of which is not required. Let $ \cP $ be the Riesz projection of the operator $ I + B $, the $ B $ being the one from (\ref{assyanisotro}), corresponding to the point $ 0 $, $ \cQ = I - \cP $. Then $ I + B \cQ $ is boundedly invertible, and  \[  I + Q (z) = ( I + B \cQ ) \( I + B \cP  + \ln ( -iz) \P1 + \wt\Theta ( z ) \) \] where $ \wt\Theta ( z ) $ is an analytic function in $ {\mathcal O } $ admitting the representation \[ \wt\Theta ( z ) = \Theta_1 ( z ) + \ln ( -iz ) \Theta_2 ( z ) \] with entire functions $ \Theta_{ 1,2 } $ such that $ \Theta_1 ( z ) = \Theta_2 ( z ) = 0 $. Thus, $ \ker ( I + Q ( z )  ) $ is nontrivial if, and only if, such is the kernel of \[ I + B \cP  + \ln ( -iz) \P1  + \wt\Theta ( z ) = \( I + \Theta_1 ( z ) \) \biggl( I + H_1 ( z ) \cP + \ln ( -iz) \(  \P1 + H_2 ( z ) \) \biggr) \] where $ H_{ 1 , 2 } ( z ) $ are functions analytic in a vicinity of $ z = 0 $, and $ H_2 ( 0 ) = 0 $. Let $ P $ be  the orthogonal projection on the orthogonal complement of $ \ker \P1 \cap \ker \cP$. Continuing to factor out invertible terms and developing the inverses of $ I + \mathrm{small} $ operators in the Neumann series we have, \[ I + H_1 ( z ) \cP + \ln ( -iz) \(  \P1 + H_2 ( z ) \)  = ( I + \ln ( -iz ) H_2 ( z ) ) \( I + H ( z ) P \) , \] where \[ H ( z ) = G (z ) + \sum_{ j = 1 }^\infty \ln^j ( -iz ) F_j ( z ) \] with functions $ F_j $ and $ G $ analytic in a vicinity of zero, and satisfying $ \len F_j ( z ) \rin \le \( C |z| \)^{ j -1 } $ for $ j \ge 1 $ and $ z $ in that vicinity. Here we took into account that $ \cP P = \cP $, $ \P1 P = \P1 $. Because of its triangle structure, the operator $ I + H ( z ) P $ has a non-trivial kernel if, and only if, the restriction of $ I +  P H ( z ) $ to the range of $ P $ has. Thus, there is a vicinity $ U^\circ $ of zero such that $ \ker ( I + Q ( z ) ) \ne \{ 0 \} $ for $ z \in U^\circ $ if, and only if, $ \det ( I + P H ( z ) P ) = 0 $, and \be\la{PHP} \len \( I + Q ( z ) \)^{ -1 } \rin \le \frac C{\left| \det ( I + P H ( z ) P ) \right| } \( 1 +  \len  P^\perp H ( z ) P \rin  \) \len  P H ( z ) P \rin^{ {\textrm{\scriptsize{rank}}} \, P } \ee where $ C $ depends on $ \mbox{rank } P $ only. On developing the determinant, \[ \det ( I + P H ( z ) P ) = \sum_{ j = 0 }^\infty \ln^j ( -iz ) f_j ( z ) \] with functions $ f_j ( z ) $ analytic in $ U^\circ $ and satisfying $ | f_j ( z )| \le \( C \left| z \right| \)^{j - M } $, $ j > M $, for an $ M $ large enough ($ M > \mbox{rank} P $ will do). Suppose now that there is a sequence $ z_l \to 0 $, $ z_l \in \overline{ \C_+ } $, such that $ \det ( I + P H ( z_l ) P ) \to 0$. Taking the limit $ z_l \to 0 $, we find consecutively that $ f_M ( 0 ) = 0 $, $ f_{ M - 1 }( 0 ) = 0 $, $ \dots , f_0 ( 0 ) = 0 $. Since $ f_j ( 0 ) = 0 $ if $ j > M $, this means that $ f_j ( 0 ) = 0 $ for all $ j \ge 0 $. Consider the function $ g ( z ) = z^{ -1 } \det ( I + P H ( z ) P ) $. The modulus of $ g ( z ) $ is either bounded away from $ 0 $, or there exists a sequence $ z_l^\prime \to 0 $, $ z_l^\prime \in \overline{ \C_+ } $, such that $ g \( z_l^\prime \) \to 0 $. In first case we have, $ \left| \det ( I + P H ( z ) P ) \right| \ge C |z| $. In the second one, arguing as above, we find that $ f^\prime_j ( 0 ) = 0 $ for all $  j \ge 0 $. Then, we consider the function $ z^{ -2 } \det ( I + P H ( z ) P ) $. This function either has the modulus bounded away from zero, or $ f^{\prime\prime}_j ( 0 ) = 0 $ for all $ j $, etc. This process must terminate after finitely many steps, for otherwise we obtain that all of $ f_j $'s, and hence the determinant, vanish identically in the vicinity $ U^\circ $ which contradicts the discreteness of $ \sigma_+ $. Thus, there exists an $ n < \infty $ such that \[ \left| \det ( I + P H ( z) P ) \right| \ge C \left| z \right|^n \] for $ z \in \overline{\C_+} $ in a vicinity of $ 0 $. This implies that $ \sigma_+ ( L ) $ and $ \sigma_0 $ do not accumulate at $ 0 $ and hence are finite. Plugging this inequality into (\ref{PHP}), and taking into account that $ \| H ( z ) \| = O ( | \ln z | ) $  in a vicinity of $ 0 $, we find that \[ \len \( I + Q ( z ) \)^{ -1 } \rin \le \frac C{\left| z \right|^n} \left| \ln^{  {\textrm{\scriptsize rank}} \, P + 1 } z \right|. \] Rewriting (\ref{Char}) as $ S^{ -1 } ( z ) = -I + 2 \( I + Q ( z ) \)^{ -1 } $ we obtain that \[ \len S^{ -1 } ( z ) \rin \le C \left| z \right|^{ -n-1 } , z \in \overline{ \omede ( 0 )}, \, z\ne 0 . \] This means that the spectral singilarity at $ z = 0 $ is of at most power order. 

We will establish the equality $ H_{ess} = H_{ac} $ by showing that the space $ H $ is a closed linear hull of $H_{ac} $ and  $ H_d $. We have just proved that the subspace $ H_d $ is finite dimensional, so the angle $ \( H_{ac} , H_d \) $ is non-zero. We are going to show that $ S ( z ) $ admits scalar multiple. The fact that all singularities of $ S $ are of at most power order means that $ \prod_{ k_j \in \sigma_0 } \( z - k_j \)^{ m_j }  S^{ -1 } ( z )  $ is a bounded function on any compact contained in $ \overline{ \C_+ }\setminus \sigma_+ ( T ) $ for some set of positive constants $ m_j  $. Then, by (ii) of lemma \ref{propertiesQany}, $ S ( z ) \to I $ when $ z \to \infty $ in $ \C_+ $ uniformly in $ \mbox{arg} \, z $. Now, let $ b $ be the Blaschke product corresponding to $ \sigma_+ ( T ) $, that is, $ b ( z ) = \prod_{ z_j \in \sigma_+ ( T ) } \( \frac{ z - z_j }{ z - \overline{z_j}} \)^{ l_j } $, $ l_j $ being the algebraic multiplicity of the eigenvalue $ z_j $. Combining the established properties of $ S ( z ) $, we obtain that the function $ \pi ( z ) = \( z + i \)^{ -J } \prod \( z - k_j \)^{ m_j } b ( z )  $, $ J = \sum_j m_j $, is a scalar multiple for $ S ( z ) $. This function is obviously a Blaschke product times an outer function. It follows from lemma \ref{abscon} that $ H_{ac} \dot{+} H_d = H $, since the selfadjoint part of $ T_{ ess } $ is absolutely continuous (it is a restriction of the a. c. selfadjoint operator $ L_0 $). 
\end{proof}

 \begin{remark} Slightly modifying the construction of Section 2, for any $ \delta $ small enough one can choose the invariant subspaces $ \cH_{ 1,2} $ satisfying the properties 1 and 2 and $ H_{ess} = \cH_1 \dot{+} \cH_2 $ so that $ \sigma ( T_1 ) $ is a finite union of intervals of common length $ \delta $. In the isotropic case in \cite[Lemma 3.11 and Theorem 3.13]{KNR} a quantitative characterization of this decomposition is given. \end{remark}
 
We now establish the Corollary in the Introduction. Claim $ (\mathrm{ii}) $ of it is an abstract fact holding for any maximal dissipative operator with a. c. spectrum\footnote{We would like to take the chance to correct a misprint in formulation of Corollary 3.15 in \cite{KNR}: the signs of the norm around $ e^{ itL }u $ went missing.} \cite{Pav}. Claim $(\mathrm{i})$ is inferred by applying the following assertion \cite[Corollary 2]{NR}. 

{\it Let $ D $ be a maiximal dissipative operator having finitely many spectral singularities, $ k_j $, and let $ \mathcal S $ be its characteristic function. If for some real $ p > 0 $  the quantity $ \left| k - k_j
\right|^p \len {\mathcal S}^{ -1 } ( k ) \rin $ is essentially bounded in a vicinity of $ k_j $ on the real axis for each $ j $, and $ \mbox{ ess sup}_{ | k | > b } \len {\mathcal S}^{ -1 } ( k ) \rin $ is finite for some $ b  $, then there exist a $ C > 0 $ such that $$ \len \left. e^{ -it D } \right|_{ H_{ac} (D) } \rin \le C \( 1 + t^p \) $$ for all $ t > 0 $.}

\section{Singularity in Isotropic Case}

In this section the operator $ K $ in $ L^2 ( \Omega) $ has the form $ K =  \( 1/2 \)\langle \cdot , {\bf 1} \rangle {\bf 1} $ where $ {\bf 1} $ is the indicator of $ [ -1 , 1 ] $. The space $ E $ then consists of functions independent on the $ \mu $ variable and is identified with the subspace of $ L^ 2 ( \R ) $ made of functions vanishing a. e. outside the set $ \{ x \in \R \colon c ( x ) \ne 0 \} $, and $ Q ( z ) $ is an integral operator with the kernel $ -\frac 12\sqrt{c ( x ) }E ( - i z | x -
y | ) \sqrt{c ( y ) }$. The operator $\tQ ( z ) $ in the representation (\ref{tQ}) has the kernel $\frac 12
\sqrt{c ( x ) } ( \ln ( -i z | x - y | ) + \gamma ) \sqrt{c ( y )
}$, and $ \Theta (z)$ is an entire function such that $ \Theta ( 0 ) = 0 $. 

\begin{lemma}\la{propertiesQ}\cite{LW}
The function $ Q ( z ) $ satisfies $ \pm \Im Q ( z ) > 0 $ when $ \mp \Re z > 0 $, $ \Im z \ge 0 $. \end{lemma}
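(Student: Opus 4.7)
My plan is to compute $\langle\Im Q(z)f,f\rangle$ in Fourier variables on $L^2(\R)$ and read off its sign. Since the kernel $q_z(x,y)=-\tfrac12\sqrt{c(x)c(y)}\,E(-iz|x-y|)$ is symmetric in $(x,y)$, the operator $\Im Q(z)$ is selfadjoint with the real symmetric kernel $-\tfrac12\sqrt{c(x)c(y)}\,\Im E(-iz|x-y|)$. Setting $g:=\sqrt{c}\,f$ and denoting by $M_z(r):=\Im E(-iz|r|)$ the corresponding even function on $\R$, the quadratic form collapses to a convolution inner product, so by Plancherel
\[
\langle\Im Q(z)f,f\rangle \;=\; -\tfrac12\,\langle M_z*g,g\rangle_{L^2(\R)} \;=\; -\frac{1}{4\pi}\int_\R \hat M_z(k)\,|\hat g(k)|^2\,dk.
\]

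The central computational step is to evaluate $\hat M_z$ in closed form. Using the representation $E(-izr)=\int_1^\infty e^{izrt}\,t^{-1}dt$ valid for $\Im z>0$, interchanging the order of integration (justified by absolute convergence in $\C_+$), computing the inner integral $\int_0^\infty\cos(kr)\,e^{izrt}\,dr=izt/(z^2t^2-k^2)$, and evaluating the remaining elementary $t$-integral $\int_1^\infty dt/(z^2t^2-k^2)$ by partial fractions, I expect to arrive at
\[
\hat M_z(k) \;=\; \frac{1}{k}\log\!\left|\frac{z+k}{z-k}\right|.
\]

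The sign conclusion is then immediate from the identity $|z+k|^2-|z-k|^2=4k\,\Re z$: for every real $k\neq 0$, $\hat M_z(k)$ has the same sign as $\Re z$. Because $|\hat g(k)|^2$ is strictly positive on a set of positive $k$-measure whenever $f\neq 0$ in $\overline{\mathrm{Ran}\,\sqrt V}$, the integral $\int\hat M_z\,|\hat g|^2\,dk$ inherits the sign of $\Re z$, and the prefactor $-\tfrac{1}{4\pi}$ flips it, giving $\Im Q(z)<0$ for $\Re z>0$ and $\Im Q(z)>0$ for $\Re z<0$, as required. The boundary case $\Im z=0$, $\Re z\neq 0$ follows by continuity, using that $Q$ extends continuously to $\R\setminus\{0\}$ (cf.\ the analytic continuation of $E$ noted in the proof of Lemma \ref{propertiesQany}).

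The main technical point to be careful about is the justification of the order exchange as $\Im z\downarrow 0$: the defining integral for $E$ is only conditionally convergent on the real axis. I would resolve this by establishing the closed form for $\hat M_z$ first for $z\in\C_+$ (where Fubini applies without difficulty) and then passing to the boundary, exploiting the manifest continuity of the resulting expression $k^{-1}\log|(z+k)/(z-k)|$ as $z$ approaches any real point distinct from $0$.
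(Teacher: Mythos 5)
Your argument is correct and is essentially the paper's: both proofs exhibit $\Im Q(z)$ as multiplication by $\sqrt c$ composed with a sign-definite Fourier multiplier composed with multiplication by $\sqrt c$, and your multiplier $-\tfrac12 k^{-1}\log\left|(z+k)/(z-k)\right|$ coincides exactly with the paper's $\Im\xi(p,z)=\tfrac1{2p}\log\left|(p-z)/(p+z)\right|$, which is obtained there by diagonalizing $iKR_0(z)$ in the Fourier variable dual to $x$ and evaluating the elementary integral $\tfrac i2\int_{-1}^1(p\mu-z)^{-1}\,d\mu$. The only real difference is the order of integration used to reach that multiplier --- the paper's route sidesteps the Fubini and boundary-limit technicalities you flag, but your handling of them (closed form in $\C_+$ first, then dominated convergence to the real axis away from $0$) is sound.
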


\begin{proof}
We only sketch the proof referring to \cite{LW,KNR} for details. For $ \Im z > 0 $ consider the operator $ \Xi ( z ) = \left.
i K R_0 ( z ) \right|_{ K H} $ acting in the space $K H $. In the
Fourier representation with respect to the variable $x$, this
operator acts as the multiplication by the function \[
\xi ( p , z ) = \frac i2 \int_{-1}^1 \frac{ d\mu }{ p\mu - z }
.\] On calculating the integral, we find that $ \Im \xi ( p , z ) $ is positive (resp. negative)  if $ \Re z $ is negative (resp. positive), and thus $ \pm \Im \Xi ( z ) $ is a positive operator when $ \mp \Re z > 0 $. It then follows that $ \pm \Im Q ( z )  > 0 $ when $ \mp \Re z > 0 $, since $ \Im Q(z) = \left.
X \Im \Xi (z) X \right|_E $, where $ X \colon L^2 ( \R ) \to L^2 ( \R ) $ is the operator of multiplication by the function $ \sqrt c $. The assertion for $ \Im z = 0 $ is verified by the taking the limit $ \Im z \downarrow 0 $ in the expression for $ \Im \xi ( p , z ) $. This limit also turns out to be a strictly positive or negative function, depending on the sign of $ \Re z $, which can be easily shown to imply that $ \pm Q ( k ) > 0 $ when $ \mp k > 0 $. \end{proof}

The lemma implies immediately that the nonreal spectrum $\sigma _+ ( T ) $ of the operator $T$ lies on the imaginary axis \cite{JLh}. Combined with the fact that $ Q ( k ) \to I $ in norm as $ |k| \to \infty $ established by lemma \ref{propertiesQany}, it also implies that 

\begin{corollary} \la{sup} For any real $ k \ne 0 $ the operator $ S ( k ) $ has a bounded inverse, and \be \label{estme}
\sup_{k \in \R \setminus [ -\delta, \delta ] } \len S^{-1} (k)
\rin < \infty \ee  for any $ \delta > 0 $. \end{corollary}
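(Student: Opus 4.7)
The plan is to exploit the identity $ S^{-1} ( z ) = -I + 2 \( I + Q ( z ) \)^{-1} $ already recorded in the previous section. This reduces the corollary to showing that $ I + Q ( k ) $ is boundedly invertible on $ E $ for every real $ k \ne 0 $, with the norm of the inverse bounded uniformly on $ \{ k \in \R \colon |k| \ge \delta \} $ for every $ \delta > 0 $.

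For the pointwise invertibility I would fix $ k \in \R \setminus \{ 0 \} $ and use that $ Q ( k ) $ is Hilbert--Schmidt (since $ c $ is compactly supported), so $ I + Q ( k ) $ is Fredholm of index zero and it suffices to check triviality of its kernel. If $ ( I + Q ( k ) ) f = 0 $, pairing with $ f $ yields $ \langle Q ( k ) f , f \rangle = - \| f \|^2 \in \R $, and consequently $ \langle \Im Q ( k ) f , f \rangle = 0 $. Lemma \ref{propertiesQ} then forces $ f = 0 $, provided that the definiteness of $ \pm \Im Q ( k ) $ is strict on $ E $; I would extract this strictness from the proof of that lemma, since in the Fourier picture $ \Im \Xi ( k ) $ is multiplication by a function of one definite sign a.e.\ in $ p $ once $ k \ne 0 $, and the sandwich by $ X $ (multiplication by $ \sqrt c $) preserves strictness on $ E $ because $ Xf \ne 0 $ whenever $ 0 \ne f \in E $.

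For the uniform bound I would use Lemma \ref{propertiesQany}(ii) to produce an $ M > \delta $ such that $ \| Q ( k ) \| \le 1/2 $ for all $ | k | \ge M $; a Neumann series then gives $ \| ( I + Q ( k ) )^{-1} \| \le 2 $ on that tail. On the compact set $ \{ \delta \le | k | \le M \} $ I would invoke norm-continuity of $ k \mapsto Q ( k ) $ (indeed $ Q $ is analytic on $ \R \setminus \{ 0 \} $ by Proposition \ref{S0any}), which makes $ ( I + Q ( k ) )^{-1} $ norm-continuous where it exists, and therefore bounded on that compact set thanks to the pointwise invertibility already established. Combining the two ranges with $ \| S^{-1} ( k ) \| \le 1 + 2 \| ( I + Q ( k ) )^{-1} \| $ yields (\ref{estme}). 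The only delicate point is the strictness of $ \Im Q ( k ) $ on $ E $ in the kernel-triviality step; the rest is standard Fredholm theory together with the high-frequency decay of $ Q $ already in hand.
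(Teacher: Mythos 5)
Your proposal is correct and follows essentially the same route as the paper, which presents the corollary as an immediate consequence of Lemma \ref{propertiesQ} (strict sign-definiteness of $\Im Q(k)$ for $k\ne 0$, giving kernel triviality of the compact perturbation $I+Q(k)$) combined with the norm decay of $Q(k)$ at infinity from Lemma \ref{propertiesQany}(ii); your filled-in details (Fredholm alternative, Neumann series on the tail, norm continuity on the compact annulus) are exactly the ones the paper leaves implicit. The one point you flag — strictness of $\Im Q(k)$ on $E$ — is indeed settled inside the proof of Lemma \ref{propertiesQ}, where the limiting multiplier $\Im\xi(p,k)$ is shown to be strictly signed, just as you describe.
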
 

According to the lemma, $ \pm Q ( z ) $ is a Herglotz function in the second/first quoter of the plane, respectively. By general properties of Herglotz functions \cite[Problem 293]{Polya} we conclude that

\begin{corollary} \la{monotone} $ Q ( i \von ) $ is a selfadjoint monotone increasing function of $ \von > 0 $. \end{corollary}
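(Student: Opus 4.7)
My plan is to split the statement into the selfadjointness and the monotonicity claims, and to prove both by passing from the operator $Q$ to scalar matrix elements and then invoking the Herglotz-function fact alluded to via Polya's problem.

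For the selfadjointness of $Q(i\von)$, the explicit kernel $-\frac12\sqrt{c(x)}E(\von|x-y|)\sqrt{c(y)}$ of $Q(i\von)$ is real and symmetric in $x,y$ (since $E(s)\in\R$ for $s>0$ and $|x-y|$ is symmetric), so $Q(i\von)$ is selfadjoint. Equivalently, in the Fourier picture used in the proof of Lemma \ref{propertiesQ}, one can see that $\xi(p,i\von)$ is purely real because the $p\mu/((p\mu)^2+\von^2)$ piece of $(p\mu-i\von)^{-1}$ is odd in $\mu$ and integrates to $0$ over $[-1,1]$, leaving only the factor $i\von\int(p\mu)^2+\von^2)^{-1}d\mu$ which is purely imaginary and is then multiplied by the prefactor $i/2$ of $\xi$.

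For monotonicity, I would fix an arbitrary $\phi\in E$ and consider the scalar function $f_\phi(z)=\langle Q(z)\phi,\phi\rangle$. By Lemma \ref{propertiesQ}, $\pm\Im f_\phi(z)>0$ when $\mp\Re z>0$ with $\Im z\ge 0$. After the substitution $w=-iz$, the function $g_\phi(w):=f_\phi(iw)$ is analytic on the right half plane $\{\Re w>0\}$, satisfies $\Im g_\phi(w)>0$ for $\Im w>0$, and is real-valued on the positive real axis by the selfadjointness step. The Taylor expansion $g_\phi(\von+iv)=g_\phi(\von)+ivg_\phi'(\von)+O(v^2)$ at a point $\von>0$ combined with $\Im g_\phi(\von+iv)\ge0$ gives $v\Re g_\phi'(\von)+O(v^2)\ge 0$; together with the reality of $g_\phi'(\von)$ (as the derivative of a real-valued function along the real axis), this forces $g_\phi'(\von)\ge 0$. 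Since $\phi$ was arbitrary, this is precisely the operator-monotonicity of $Q(i\von)$ in $\von$, i.e.\ the operator version of the scalar Herglotz fact from Polya's Problem 293.

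The only genuinely delicate step is selfadjointness: the abstract formula $Q(z)=i\sqrt{V}(L_0-z)^{-1}\sqrt{V}$ has no obvious $z\mapsto\bar z$ symmetry that would give $Q(i\von)^*=Q(i\von)$ directly, because taking the adjoint flips the resolvent but not the prefactor $i$. One genuinely needs either the explicit real-symmetric form of the kernel or the reality of the Fourier multiplier $\xi(p,i\von)$, both of which reflect the specific structure of $L_0=i\mu\partial_x$ together with the averaging character of $K$ in the isotropic case. Once this is in hand the rest of the argument is routine.
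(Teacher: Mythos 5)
Your proof is correct and takes essentially the same route as the paper: both deduce the corollary from Lemma \ref{propertiesQ} by the standard Herglotz argument, the only difference being that the paper cites P\'olya--Szeg\H{o} (Problem 293) for the monotonicity step where you write out the Taylor-expansion proof explicitly. Your additional verification of selfadjointness from the real symmetric kernel $-\frac 12 \sqrt{c(x)}\,E(\von|x-y|)\sqrt{c(y)}$ fills in a detail the paper leaves implicit, and is accurate.
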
   

The monotonicity claimed by this corollary provides another, historically first, proof of finiteness of $ \sigma_+ ( T ) $ \cite{JLh,LW}. 

From now on $ \langle \cdot , \cdot \rangle $ stands for the inner product in $ L^2 ( \R ) $. The assertion of Proposition \ref{S0any} then holds with $ \ell = \frac 1{\sqrt 2} \sqrt c $. Let $ \left\{ \eta _n ( \von ) \right\} _{n=1}^\infty $, $ \eta_n ( \von ) \le \eta_{ n+1 } ( \von ) $, be the eigenvalues of the operator $ \tQ ( i\von ) $, and let ${\mathcal B}_c = \{ k_n = \lim_{\von \downarrow 0 }
\eta_n ( \von ) \} $. Applying the Nagy-Foias criterion and taking into account Corollary \ref{sup}, we find that \cite[Proposition 3.9 and Corollary 3.7]{KNR}

\begin{corollary}\la{sepEco}
The following are equivalent

${\mathrm (i)}$ $  0 $ is a spectral singularity,

${\mathrm (ii)}$ $T_{ess} $ is not similar to a self-adjoint operator,

${\mathrm (iii)}$
$ \ker S ( 0 ) \ne \{ 0 \} $,

${\mathrm (iv)}$ $ -1 \in {\mathcal B}_c $.
\end{corollary}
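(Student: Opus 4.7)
The plan is to close the cycle $\mathrm{(i)} \Leftrightarrow \mathrm{(ii)} \Leftrightarrow \mathrm{(iii)} \Leftrightarrow \mathrm{(iv)}$, with the bulk of the work concentrated in $\mathrm{(iii)} \Leftrightarrow \mathrm{(iv)}$.

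For $\mathrm{(i)} \Leftrightarrow \mathrm{(ii)}$ I would invoke the Nagy--Foia\c{s} criterion together with Corollary~\ref{sup}: the criterion asserts that $T_{ess}$ is similar to a self-adjoint operator iff the essential supremum of $\|S^{-1}(k)\|$ on $\R$ is finite, and Corollary~\ref{sup} bounds this quantity off any neighbourhood of $0$, so failure of similarity is equivalent to $\|S^{-1}(k)\|$ blowing up near $k=0$, which is the spectral singularity condition. A small remark is that the definition of singularity uses the outer factor $S_e$, but since $\sigma_+(T)$ consists of finitely many points on the imaginary axis (Lemma~\ref{propertiesQ}), the inner factor $S_i$ is a finite Blaschke--Potapov product with isometric boundary values, so $\|S^{-1}(k)\|$ and $\|S_e^{-1}(k)\|$ are comparable near $0$. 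For $\mathrm{(i)} \Leftrightarrow \mathrm{(iii)}$ I would use the norm continuity of $S$ at $0$ (Proposition~\ref{S0any}) and the fact that $S(z)-I$ is compact (since $Q(z)$ is Hilbert--Schmidt), so $S(0)$ is Fredholm of index zero and injectivity equals invertibility; invertibility plus continuity gives $\|S^{-1}(k)\|$ bounded near $0$, while non-invertibility forces $\|S^{-1}(k)\| \to \infty$ along any $k \to 0$ by a standard norm-convergence argument.

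The heart is $\mathrm{(iii)} \Leftrightarrow \mathrm{(iv)}$. Starting from the explicit formula (\ref{este}) for $S(0)$, using that $\tQ(i\delta)$ is self-adjoint in the isotropic case, and writing $R = (I - \tQ(i\delta))^{-1}$, the identity $(I + \tQ(i\delta))(I - \tQ(i\delta))^{-1} = 2R - I$ recasts $S(0)\phi = 0$, after multiplication by $(I - \tQ(i\delta))$, as
\[
(I + \tQ(i\delta))\phi \;=\; c_\phi\,\ell, \qquad c_\phi \;=\; \frac{2}{\vartheta_c}\langle R\phi, \ell\rangle.
\]
Two cases arise. If $c_\phi = 0$, then $\tQ(i\delta)\phi = -\phi$ and, since $R\phi = \phi/2$, the relation $\langle R\phi, \ell\rangle = 0$ forces $\phi \perp \ell$. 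If $c_\phi \ne 0$, then $-1 \notin \sigma(\tQ(i\delta))$, $\phi$ is proportional to $(I + \tQ(i\delta))^{-1}\ell$, and the self-consistency equation becomes, via the resolvent identity $2(I - \tQ(i\delta)^2)^{-1} - (I - \tQ(i\delta))^{-1} = (I + \tQ(i\delta))^{-1}$, the scalar condition $\langle (I + \tQ(i\delta))^{-1}\ell, \ell\rangle = 0$.

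Finally I match these two cases with $-1 \in \mathcal{B}_c$. Since $\tQ(i\von) = \tQ(i\delta) + \alpha(i\von)\langle \cdot, \ell\rangle\ell$ is a self-adjoint rank-one perturbation of $\tQ(i\delta)$ with $|\alpha(i\von)| \to \infty$ as $\von \downarrow 0$, the standard secular equation gives that $\lambda$ is an eigenvalue of $\tQ(i\von)$ iff either $\lambda$ is an eigenvalue of $\tQ(i\delta)$ with eigenvector in $\ell^\perp$, or $1 + \alpha(i\von)\langle(\tQ(i\delta) - \lambda)^{-1}\ell, \ell\rangle = 0$. Passing $\von \downarrow 0$ in the second branch identifies its limit eigenvalues with the zeros of $F(\lambda) = \langle(\tQ(i\delta) - \lambda)^{-1}\ell, \ell\rangle$, so $\mathcal{B}_c$ is the union of the point spectrum of $\tQ(i\delta)$ on $\ell^\perp$ with the zero set of $F$. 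Evaluating at $\lambda = -1$ matches these two branches precisely with the two cases above, establishing $\mathrm{(iii)} \Leftrightarrow \mathrm{(iv)}$. The main obstacle is the algebraic collapse in Case~B; everything else is routine once the Nagy--Foia\c{s} framework, Fredholm theory for $S(0)$, and rank-one self-adjoint perturbation calculus are in place.
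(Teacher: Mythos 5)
Your proposal is essentially sound, and it does considerably more than the paper itself, which offers no real proof of this corollary: the text merely says ``Applying the Nagy--Foia\c{s} criterion and taking into account Corollary \ref{sup}'' and delegates the substance to \cite[Proposition 3.9 and Corollary 3.7]{KNR}. Your treatment of $\mathrm{(i)}\Leftrightarrow\mathrm{(ii)}$ is exactly the route the paper indicates, and your reduction of $\mathrm{(i)}\Leftrightarrow\mathrm{(iii)}$ via norm continuity of $S$ at $0$ (Proposition \ref{S0any}) and Fredholmness of $S(0)$ (a compact perturbation of $-I$) is the natural argument. The computation behind $\mathrm{(iii)}\Leftrightarrow\mathrm{(iv)}$ dovetails with the machinery the paper develops in the proof of Theorem 2: your two cases are precisely the dichotomy between $\mathrm{N}$ non-trivial and $\mathrm{N}$ trivial there, and your scalar condition $\llangle (I+\tQ(i\delta))^{-1}\ell,\ell\rrangle=0$ is the paper's condition $\varrho_c=0$.

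One step needs repair. In Case B you assert ``if $c_\phi\ne 0$ then $-1\notin\sigma(\tQ(i\delta))$''; that implication is false: the equation $(I+\tQ(i\delta))\phi=c_\phi\ell$ only says $\ell\in\mathrm{Ran}\,(I+\tQ(i\delta))$, and if $\ker(I+\tQ(i\delta))$ is non-trivial the solution is not determined up to a scalar. The correct organization is the one the paper uses in Theorem 2: if $\mathrm{N}\ne\{0\}$, your Case A already yields $\ker S(0)\supseteq\mathrm{N}\ne\{0\}$ and $-1$ is a static eigenvalue of every $\tQ(i\von)$, so $\mathrm{(iii)}$ and $\mathrm{(iv)}$ both hold and the argument stops there; if $\mathrm{N}=\{0\}$, then $I+\tQ(i\xi)$ is invertible for all but at most one $\xi>0$ (two failures would produce a non-zero element of $\mathrm{N}$), one fixes such a $\xi=\delta$, and only then is Case B legitimate. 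A second, smaller caveat: the identification of $\mathcal{B}_c$ with ``point spectrum on $\ell^\perp$ together with the zeros of $F$'' must be read away from the poles of $F$; at $\lambda=-1$ this is harmless, since near a pole the moving eigenvalues converge to the adjacent zeros of $F$ rather than to the pole itself, so a simple eigenvalue of $\tQ(i\delta)$ at $-1$ with eigenvector not orthogonal to $\ell$ contributes to neither branch --- consistently with $\ker S(0)=\{0\}$ in that situation. With these two adjustments the cycle closes.
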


We say that $ c \in \cE $ if any of these equivalent conditions is satisfied. Note
that for any nonzero $c $ the function $\varkappa c $ belongs to $
\cE $ if $ - 1/\varkappa \in {\mathcal B}_c $, as was mentioned in Introduction.

Let $ Y $ and $ Y_1 $ be the integral operators in $ E $ with kernels $ \frac 12 \sqrt {c ( x )} ( \gamma + \ln | x - y | ) \sqrt {c ( y )} $ and $ \frac 12 \sqrt {c ( x )} | x - y | \sqrt {c ( y )} $, respectively, $ \mathrm{N} = \{ u \in \ker ( I + Y ) \colon \llangle u , \sqrt{c} \rrangle = 0 \} $, and $ P_\mathrm{N} $ be the orthogonal projection on $ \mathrm{N} $ in $ E $. Further analysis splits into two cases depending on whether the subspace $ \mathrm{N} $ is trivial or not. 

\begin{theorem}
Let $ c \in \cE $. If the subspace $ \mathrm{N} $ is trivial, then for all $ \xi > 0 $ save for at most one the number $ p( \xi ) \equiv \mbox{dist} \( -1,
\sigma( \tQ ( i\xi ))\) \neq 0 $, and for any such $ \xi $ the following asymptotics holds in the operator norm when $ z \to 0 $ in $ \overline{\C}_+ $, \be\label{Slog}  S^{ -1 } ( z ) = G - \ln \( \frac {-iz}\xi \) \llangle \cdot , \tilde{e} \rrangle \tilde{e} + O ( | z \ln^2 z| ) , \ee where \[ G = \frac{ I - \tQ ( i \xi ) }{ I + \tQ ( i \xi ) } , \] \[ \tilde{e} = \( I + \tQ ( i \xi ) \)^{ -1 } \sqrt c . \] If the subspace $ \mathrm{N} $ is non-trivial, then the operator $ M = \left. P_{\mathrm{N}} Y_1 \right|_{\mathrm{N}} $ in the space $ \mathrm{N} $ is invertible, and the following asymptotics holds in the operator norm when $ z \to 0 $ in $ \overline{\C}_+ $, provided that the restriction of $ I + Y $ to its reducing subspace $ \mathrm{N}^\perp $ is invertible, \begin{eqnarray} \label{Ssimple}  & & S^{ -1 } ( z ) = -\frac 1z 2i M^{ -1 } P_N + B_0 + \nonumber \\  & & \left\{ \begin{array}{cc}  0, & \vartheta \ne 0, \\ \noalign{\vskip5pt} - \ln ( iz ) \llangle \cdot , \Lambda^* \sqrt c \rrangle \( I - M^{ -1 } P_N Y_1 \) \Lambda \sqrt c , & \vartheta = 0, \end{array} \right. + O \( \left| \frac 1{\ln z} \right| \), \\ \noalign{\vskip3pt} & & B_0 = -I + 2 \( I - M^{ -1 } P_N Y_1 \) \Lambda P_{\mathrm{N}}^\perp \( I - Y_1 M^{ -1 } P_N \) , \nonumber \\ \noalign{\vskip3pt} & & 
\Lambda = \( \left. \( I + Y \) \right|_{ \mathrm{N}^\perp } \)^{ -1 } , \vartheta = \llangle \Lambda \sqrt c , \sqrt c \rrangle  . \nonumber \end{eqnarray} \end{theorem}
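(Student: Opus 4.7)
The approach is to compute an asymptotic expansion of $(I+Q(z))^{-1}$ near $z=0$ and convert it via $S^{-1}(z) = -I + 2(I+Q(z))^{-1}$, which follows from (\ref{char1any}) by the same algebraic identity. Write $Q(z) = \tQ(z) + \Theta(z)$ with $\tQ(z) = Y + \ln(-iz)\llangle \cdot, \ell\rrangle \ell$, where $\ell = \sqrt c / \sqrt 2$, and $\Theta(z)$ is entire with leading term $\Theta(z) = iz\,Y_1 + O(z^2)$ obtained from the Taylor series of $\theta$ at zero. The case dichotomy encodes whether $I+Y$ is invertible modulo the direction $\ell$: triviality of $\mathrm{N}$ permits a direct rank-one Sherman--Morrison inversion of $I + \tQ(z)$, while non-triviality makes the $\mathrm{N}$-block of $I + \tQ(z)$ vanish identically at $z=0$, forcing the subleading $iz\,Y_1$ term to carry the inversion and produce a $1/z$ pole.

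For Case 1, the at-most-one $\xi$ assertion reduces to elementary linear algebra. If $-1 \in \sigma(\tQ(i\xi_j))$ for two values of $\xi$ with normalized eigenvectors $u_j$, the relation $(I+Y)u_j = -\ln\xi_j\,\llangle u_j,\ell\rrangle\ell$ combined with $\mathrm{N}=\{0\}$---which forces $\llangle u_j,\ell\rrangle \ne 0$ and $\dim\ker(I+Y) \le 1$---pins down $\ln\xi_j$ uniquely in both structural subcases. Fix such a $\xi$, set $A := I + \tQ(i\xi)$. The critical intermediate fact is $\nu := \llangle A^{-1}\ell,\ell\rrangle = 0$. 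Invoke $c \in \cE$, i.e. $-1 \in {\mathcal B}_c$: by a Krein-type identity, the moving eigenvalues $\eta$ of $\tQ(i\von)$ satisfy $\llangle (Y-\eta)^{-1}\ell,\ell\rrangle = -1/\ln\von$, and the limit $\eta \to -1$ as $\von \downarrow 0$ forces $\llangle (I+Y)^{-1}\ell,\ell\rrangle = 0$; the alternative $-1 \in \sigma(Y)$ is excluded because eigenvalue interlacing together with $\mathrm{N} = \{0\}$ sends the corresponding branch to $-\infty$ rather than to $-1$. A further Sherman--Morrison then gives $\nu = 0$, so the Sherman--Morrison inverse of $I + \tQ(z) = A + \ln(-iz/\xi)\ell\otimes\ell$ collapses to $(I+\tQ(z))^{-1} = A^{-1} - \frac 12 \ln(-iz/\xi)\,\tilde e \otimes \tilde e$, of order $O(|\ln z|)$. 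The resolvent identity absorbs the $\Theta(z) = O(|z|)$ perturbation with error $O(|\ln z|\cdot|z|\cdot|\ln z|) = O(|z\ln^2 z|)$; substitution into $S^{-1}$ gives (\ref{Slog}) with $G = -I + 2A^{-1}$.

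For Case 2, since $\mathrm{N} \subset \ker(I+Y)$ and $Y$ is self-adjoint, $\mathrm{N}$ reduces $I+Y$; moreover $\ell \in \mathrm{N}^\perp$ by definition of $\mathrm{N}$. In the block decomposition $E = \mathrm{N} \oplus \mathrm{N}^\perp$ the operator $I + \tQ(z)$ vanishes on $\mathrm{N}$; the $\mathrm{N}$-block of $I + Q(z)$ therefore has leading part $iz\,M + O(z^2)$ with $M = P_\mathrm{N} Y_1|_\mathrm{N}$, while the $\mathrm{N}^\perp$-block is $\Lambda^{-1} + \ln(-iz)\ell\otimes\ell + O(z)$. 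The technical heart of the proof is that $M$ is invertible: for $u \in \mathrm{N}$, setting $f = \sqrt c\,u$, the defining relation $\llangle u, \sqrt c\rrangle = 0$ gives $\int_\R f(t)\,dt = 0$, so $F(x) := \int_{-\infty}^x f(t)\,dt$ is compactly supported; two integrations by parts then transform $\llangle Y_1 u, u\rrangle = \frac 12 \iint f(x)|x-y|f(y)\,dx\,dy$ into $-\|F\|_2^2$. Vanishing forces $f \equiv 0$, hence $u = 0$ in $E$, so $-P_\mathrm{N} Y_1|_\mathrm{N}$ is positive definite and $M$ is invertible. Schur complement inversion then produces the leading singularity $-(2i/z)M^{-1}P_\mathrm{N}$ in $S^{-1}$. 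A second Sherman--Morrison applied to the $\mathrm{N}^\perp$-Schur complement yields a coefficient $\ln(-iz)/(1 + (\vartheta/2)\ln(-iz))$ which converges to the constant $2/\vartheta$ when $\vartheta \ne 0$ (absorbed into $B_0$) but equals $\ln(-iz)$ when $\vartheta = 0$ (producing the displayed extra log term). Tracking the cross-contributions from the off-diagonal blocks $iz\,Y_{12}$ and $iz\,Y_{21}$ then assembles into the compound form $B_0 = -I + 2(I - M^{-1}P_\mathrm{N} Y_1)\Lambda P_\mathrm{N}^\perp(I - Y_1 M^{-1}P_\mathrm{N})$, with all residual terms collapsing into the $O(1/|\ln z|)$ remainder. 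The main obstacle is this precise bookkeeping: assembling the compound $B_0$ from three layered inversions---block-Schur in $\mathrm{N}$, Sherman--Morrison in $\mathrm{N}^\perp$, and the off-diagonal mixing---while isolating the logarithmic correction exactly in the $\vartheta = 0$ case.
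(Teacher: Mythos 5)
Your proposal follows essentially the same route as the paper: it writes $S^{-1}(z) = -I + 2(I+Q(z))^{-1}$, performs a Sherman--Morrison inversion of $I+\tQ(z)$ in the first case with the key vanishing $\llangle (I+\tQ(i\xi))^{-1}\sqrt c , \sqrt c \rrangle = 0$, and in the second case gives the same integration-by-parts proof that $-M$ is positive definite on $\mathrm{N}$ followed by the same block (Schur-complement) inversion in $E = \mathrm{N}\oplus\mathrm{N}^\perp$ with the $iz\,Y_1$ term carrying the $\mathrm{N}$-block and the $\vartheta=0$ dichotomy arising from the rank-one denominator $1+\beta(z)\vartheta$. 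The only divergence is your derivation of the key vanishing via Krein's formula and tracking of the eigenvalue branches of $\tQ(i\von)$, where the paper gets it in one line by observing that otherwise $(I+\tQ(z))^{-1}$, and hence $S^{-1}(z)$, would remain bounded at $z=0$, contradicting $c\in\cE$; both arguments are sound, the paper's being the more direct.
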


Notice that the first alternative in this theorem includes the case when $ \ker ( I + Y ) $ is trivial. It is easy to see \cite{KNR} that there are $ c $'s such that $ \mathrm{N}$ is non-trivial and hence the second alternative in the theorem is realized. For instance, this is the case for $ c  ( x ) = \varkappa \left\{ \begin{array}{cc} 1 ,& |x| < a \\ 0 , & |x| > a \end{array} \right.  $ with an appropriate constant $ \varkappa > 0 $. 

\begin{proof}  Let the subspace $ \mathrm{N} $ be trivial. Then $ p( \xi ) \equiv \mbox{dist} \( -1,
\sigma( \tQ ( i\xi ))\) \neq 0 $ for all $ \xi > 0 $ except for at most one, for otherwise a non-zero linear combination of two elements of $ \ker \( I + \tQ ( i \xi ) \) $ corresponding to two different $ \xi $ would belong to $ \mathrm{N} $. Fix a $ \xi $ such that $ p ( \xi ) \ne 0 $.
Let $\alpha (z) =
\frac 12 \ln \( - \xi^{ -1 } i z \) $. To establish (\ref{Slog}), we calculate the inverse of $ I + \tQ ( z ) $ to obtain, \[ \( I + \tQ (z) \)^{-1} = \( I
+ \tQ ( i\xi ) \)^{-1} - \frac { \alpha (z) } { 1 + \alpha (z)
\varrho_c } \llangle \cdot , \tilde{e} \rrangle \tilde{e},  \] where $ \varrho_c  =
\llangle \tilde{e} , \sqrt c \rrangle
$. Since $ c \in \cE $, the number $ \varrho_c  = 0 $, for otherwise the function $ \( I + \tQ ( z ) \)^{ -1 } $ would be bounded at $ z = 0 $. Expressing $ \( I + Q ( z ) \)^{ -1 } $ via $ \( I + \tQ ( z ) \)^{ -1 } $ through the resolvent identity and taking into account that $ \Theta ( z ) = O ( z ) $ when $ z \to 0 $, we infer 
\[ \( I + Q (z) \)^{-1} = \( I 
+ \tQ ( i\xi ) \)^{-1} - \alpha (z) \llangle \cdot , \tilde{e} \rrangle \tilde{e} + O \( | z \ln^2 z | \) , \; z \in \omede ( 0 ) , \] and (\ref{Slog}) follows from (\ref{Char}) rewritten as \be\label{minus1} S^{-1} (z) = -I + 2 \( I + Q (z) \)^{-1} . \ee 

Now, let the subspace $ \mathrm{N} $ be non-trivial. First, we shall show that $ M $ is invertible. Indeed, let $ h \in E $ be real and such that $ \llangle \sqrt c , h \rrangle = 0 $, and let $ f = \sqrt c h $. Then, on integrating by parts we find, 
\bequnan \llangle Y_1 h , h \rrangle = \int_{-a}^a \int_{ -a }^x ( x - y )  f ( x ) f ( y ) \mathrm{d} y\, \mathrm{d} x = \( \int_{ -a }^a f ( x ) \mathrm{d} x  \) \int_{ -a }^a ( a - y ) f ( y ) \mathrm{d} y  \\ - \int_{-a}^a \mathrm{d} x \left| \int_{ -a }^x  f ( y ) \mathrm{d} y \right|^2 . \eequnan The boundary term in the right hand side vanishes, because $ \int f ( x ) \mathrm{d} x = 0 $ by the orthogonality condition, to give \[ \llangle Y_1 f , f \rrangle = - \int_{-a}^a \mathrm{d} x \left| \int_{ -a }^x  f ( y ) \mathrm{d} y \right|^2 \] which vanishes if, and only if, $ f = 0 $. It follows that $ \ker M = \{ 0 \} $, for the quadratic form of $ Y_1 $, obviously, vanishes on any $ h \in \ker M $, and hence the operator $ M $ is invertible, since $ \mathrm{N} $ is finite dimensional. 

Let us establish (\ref{Ssimple}). Define $ \beta ( z ) = \( 1/2 \) \ln ( -iz ) $, $ Q_1 (z) = Y + \beta ( z) \llangle \cdot , \sqrt c \rrangle \sqrt c + iz Y_1 $. First, we calculate the asymptotics of $ \( I + Q_1 ( z ) \)^{ -1 } $. Consider the equation $ ( I + Q_1 ( z ) ) f = g $ in components, $ f_\mathrm{N} = P_\mathrm{N} f $,  $ f_\mathrm{N}^\perp = f - f_\mathrm{N} $,
\bequnan ( I + Y ) f_\mathrm{N}^\perp + i z P_\mathrm{N}^\perp Y_1 f + \beta ( z ) \llangle f_\mathrm{N}^\perp , \sqrt c \rrangle P_\mathrm{N}^\perp \sqrt c = P_\mathrm{N}^\perp g, \\ i z P_\mathrm{N} Y_1 f = P_\mathrm{N} g . \eequnan Using the invertibility of $ M $, we solve the second equation with respect to $ f_\mathrm{N} $, and plug the result into the first one to find out \bequnan  f_\mathrm{N} = \frac 1{iz} M^{ -1 } P_\mathrm{N} g - M^{ -1 } P_\mathrm{N} Y_1  f_\mathrm{N}^\perp , \\ V f_\mathrm{N}^\perp + \beta ( z ) \llangle f_\mathrm{N}^\perp , \sqrt c \rrangle \sqrt c = P_\mathrm{N}^\perp \( g - Y_1 M^{ -1 } P_\mathrm{N} g \) , \eequnan where $ V $ is an operator in $\mathrm{N}^\perp $ such that $ V = \left. ( I + Y ) \right|_{\mathrm{N}^\perp } + O ( |z| ) $.  Let $ \vartheta \ne 0 $. Solving the second equation taking into account that $ V $ is invertible and $ V^{ -1} = G + O ( |z| ) $, and substituting the result back, we obtain 
\bequnan f =   \frac 1{iz} M^{ -1 } P_N g + \( I - M^{ -1 } P_N Y_1 \) \Lambda P_{\mathrm{N}}^\perp \( I - Y_1 M^{ -1 } P_N \)g - \\ \frac 1\vartheta  \llangle g , \Lambda^* \sqrt c \rrangle \( I - M^{ -1 } P_N Y_1 \) \Lambda \sqrt c + O \( \left| \frac 1{\ln z} \right| \) \| g \|. \eequnan Now, that the asymptotics of $ f = \( I + Q_1 ( z ) \)^{ -1 } g $ is found, we notice that $ Q ( z ) - Q_1 ( z ) = O \( \left| z \right|^2 \)  $, and therefore the asymptotics of $ \( I + Q ( z ) \)^{ -1 }g $ is the same. The asymptotics (\ref{Ssimple}) follows from (\ref{minus1}). The case $ \vartheta = 0 $ is treated similarly.
\end{proof}

The assumption of invertibility of $ I + Y $ on the subspace $ \mathrm{N}^\perp $ in the second part of the theorem is only made for convenience. If the restriction of $ I + Y $ to $ \mathrm{N}^\perp $ is not invertible, then the restriction of $ I + \tilde Y $, $ \tilde Y  = Y + \frac 12 \ln \delta \llangle \cdot , \sqrt c \rrangle \sqrt c $ with a $ \delta \ne 1 $, is. The asymptotics (\ref{Ssimple}) remains valid if we substitute $ \tilde Y $ for $ Y $, $ z / \delta $ for $ z $, and $ \delta Y_1 $ for $ Y_1 $ in there.

\begin{corollary}
The characteristic function satisfies \be \la{firstorderest} \len S^{ -1 } ( z ) \rin
\le \frac C{|z|} \ee for $ z \in \omede ( 0 ) $. If the subspace $ \mathrm{N} $ is non-trivial, then $ 0 $ is the spectral singularity of the first order in the strict sense, \be\la{optimal} \len S ( k ) h \rin = O ( |k| ) , \; k \to 0  \ee for a non-zero $ h \in E $ (in fact, for any $ h \in \mathrm{N} $).
\end{corollary}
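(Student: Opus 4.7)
The plan is to read both estimates off the asymptotic expansions of $ S^{-1}(z) $ provided by the theorem, using the contractivity $ \len S(z) \rin \le 1 $ in $ \C_+ $ (automatic for a characteristic function) to transfer the information back to $ S $. The first estimate \eqref{firstorderest} is essentially bookkeeping. In the trivial-$\mathrm{N}$ alternative, \eqref{Slog} already gives a bound of order $ |\ln z | $, which is dominated by $ C/|z| $ in $ \omede (0) $. In the non-trivial-$\mathrm{N}$ alternative, the leading term $ -(2i/z) M^{-1} P_{\mathrm{N}} $ of \eqref{Ssimple} produces the stated $ 1/|z| $ behaviour, while $ B_0 $, the logarithmic summand (present when $ \vartheta = 0 $) and the $ O ( 1/|\ln z | ) $ remainder are of smaller order. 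If the invertibility of $ I + Y $ on $ \mathrm{N}^\perp $ happens to fail, I would appeal to the comment after the theorem and replace $ Y $ by $ \tilde Y $, which does not alter the orders.

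For \eqref{optimal} I would fix a non-zero $ h \in \mathrm{N} $ and compute $ S^{-1} ( z ) h $ from \eqref{Ssimple}. The key observation is that $ \mathrm{N} $ is, by its definition, the subset of $ \ker ( I + Y ) $ orthogonal to $ \sqrt c $; in particular $ \sqrt c \in \mathrm{N}^\perp $, so $ \Lambda^* \sqrt c \in \mathrm{N}^\perp $ and the scalar $ \llangle h , \Lambda^* \sqrt c \rrangle $ vanishes. This kills the logarithmic term that could a priori appear in the $ \vartheta = 0 $ alternative, and together with $ P_{\mathrm{N}} h = h $ it reduces the expansion to
\[ S^{-1} ( z ) h \ = \ -\frac{2i}{z}\, M^{-1} h \ + \ B_0 h \ + \ O \( \frac 1{ | \ln z | } \) , \quad z \to 0 . \]

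To finish I would apply $ S ( z ) $ on the left of this identity. The contractivity $ \len S ( z ) \rin \le 1 $ makes $ S ( z ) B_0 h $ and the remainder bounded uniformly in $ z $, so rearranging yields $ S ( z ) M^{-1} h = O ( | z | ) $ as $ z \to 0 $ in $ \overline{\C_+} $. Since $ M \colon \mathrm{N} \to \mathrm{N} $ is a bijection (by the invertibility of $ M $ established in the theorem), $ M^{-1} h $ ranges over all of $ \mathrm{N} $ as $ h $ does, and therefore $ \len S ( z ) h^\prime \rin = O ( | z | ) $ for every $ h^\prime \in \mathrm{N} $; restricting to real $ z = k $ gives \eqref{optimal}. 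I expect the single non-routine step to be precisely the cancellation $ \llangle h , \Lambda^* \sqrt c \rrangle = 0 $; everything else is manipulation of expansions already at our disposal.
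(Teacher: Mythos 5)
Your proof is correct and follows the paper's (very terse) argument: the paper simply declares the corollary ``immediate from the asymptotics (\ref{Slog}) and (\ref{Ssimple})'', and your write-up supplies exactly the intended details --- reading $ \len S^{-1}(z) \rin \le C/|z| $ off the expansions, and obtaining (\ref{optimal}) by applying the contraction $ S(z) $ to $ S^{-1}(z)h $ for $ h \in \mathrm{N} $ and using the invertibility of $ M $ on $ \mathrm{N} $. Your observation that $ \llangle h , \Lambda^* \sqrt c \rrangle = 0 $ (since $ \sqrt c \in \mathrm{N}^\perp $ and $ \Lambda^* $ preserves $ \mathrm{N}^\perp $) is the right way to dispose of the logarithmic term in the $ \vartheta = 0 $ case.
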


This corollary is immediate from asymptotics (\ref{Slog}) and (\ref{Ssimple}).
One can now apply results from \cite{NR}. Theorem 3 in that paper states that 

{\it If a maximal dissipative operator $ D $ has a spectral singularity of order $ p > 0 $ in the strict sense then for any sufficiently small $
\von > 0 $ there exists a $ u \in H_{ac} ( D ) $ such that
\[ \len e^{ -itD } u \rin = t^{ 1 - \von } ( 1 + o
( 1 ) ), \;\;\; t \to + \infty . \] }

An assertion similar to this theorem and \cite[Corollary 2]{NR} cited at the end of section 3, holds for logarithmic singularities \cite[Corollary 4 and Theorem 5]{KNR}. Combining these and taking into account the unitary equivalence between $ L $ and $ -T $, we arrive at the following conclusion.

\begin{corollary}\la{semigroest} If $ c \in \cE $ then \[ \len \left. e^{ itL } \right|_{ J H_{ess} } \rin \le C ( 1 + t ) \] for some $ c \ne 0 $. If $ \mathrm{N} $ is non-trivial, then for any sufficiently small $
\von > 0 $ there exists a $ u \in J H_{ ess } $ such that
\be\la{exactness} \len e^{ itL } u \rin = t^{ 1 - \von } ( 1 + o
( 1 ) ), \;\;\;  t \to + \infty . \ee If $ \mathrm{N} $ is trivial, then $$ \len \left. e^{ itL }  \right|_{ J H_{ess}}
\rin \le C \ln t $$ for all $ t \ge 2 $, and for any
sufficiently small $ \von > 0 $ there exists a $ u \in J H_{ess} $
such that $$ \len e^{ itL } u \rin = \( \ln t \)^{ 1 - \von } ( 1
+ o ( 1 ) ), \;\;\; t \to + \infty . $$ \end{corollary}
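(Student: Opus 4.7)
The plan is to translate the asymptotics of $S^{-1}(z)$ at $z=0$ from the preceding theorem into bounds on $e^{-itT}|_{H_{ac}(T)}$ through the cited abstract results, and then transport the conclusion to $L$ via the unitary equivalence $L = J(-\widetilde T)J^*$ noted in Section~3. Concretely, applying $J$ identifies $H_{ess}(L) = J H_{ess}(\widetilde T)$, and up to the irrelevant relabeling $K \mapsto \widetilde K$ the operator $\widetilde T$ has the same characteristic function (or an obviously equivalent one) as $T$, so the asymptotics of $S^{-1}$ proved in the preceding theorem applies without change. By the main theorem of Section~3 we have $H_{ess}(T) = H_{ac}(T)$, so the cited abstract results on $H_{ac}$ give bounds on all of $H_{ess}$.

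First I would verify the hypotheses of the abstract results. By Corollary~\ref{sup} one has $\sup_{|k|>\delta}\|S^{-1}(k)\| < \infty$ for every $\delta>0$, so the only candidate for a spectral singularity is $z=0$, and the inequality $\|S^{-1}(z)\| \le C/|z|$ in $\omede(0)$ from \eqref{firstorderest} shows that $0$ is a spectral singularity of at most first order. Thus the hypothesis of \cite[Corollary~2]{NR} (cited at the end of Section~3) is satisfied with $p=1$, yielding
\[
 \len \left. e^{-itT}\right|_{H_{ac}(T)} \rin \le C(1+t),\quad t>0.
\]
Pulling back through $J$ gives the first estimate of the corollary.

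Next I would separate the two alternatives using the dichotomy $\mathrm{N}$ trivial vs.\ non-trivial supplied by the preceding theorem. If $\mathrm{N}$ is non-trivial, \eqref{optimal} asserts that $0$ is a spectral singularity of order $1$ \emph{in the strict sense}, so \cite[Theorem~3]{NR} (quoted in the text) applies with $p=1$ and produces, for every sufficiently small $\von>0$, a vector in $H_{ac}(T)$ realizing the growth $t^{1-\von}(1+o(1))$; transporting by $J$ gives \eqref{exactness}. If $\mathrm{N}$ is trivial, the asymptotics \eqref{Slog} shows that $S^{-1}$ has only a logarithmic blow-up at $0$ of the strict-sense type needed for \cite[Corollary~4 and Theorem~5]{KNR}; those results then supply both the upper bound $C\ln t$ for $t\ge 2$ and the matching lower bound $(\ln t)^{1-\von}(1+o(1))$ on a suitable vector in $H_{ac}(T)$.

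The only step that requires some care is bookkeeping: one must check that the hypotheses stated in \cite{NR,KNR} for the abstract estimates are exactly captured by \eqref{firstorderest}, \eqref{optimal} and \eqref{Slog} together with Corollary~\ref{sup} (so that the singularity is unique, isolated on the real axis, and of the precise form required), and that $J$ intertwines $e^{itL}$ with $e^{-it\widetilde T}$ in a way preserving $H_{ac}$. This is straightforward but is the main place where an oversight could occur; the rest is a direct quotation of the cited theorems.
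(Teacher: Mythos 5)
Your proposal matches the paper's own argument: the paper likewise derives the corollary by feeding the estimates \eqref{firstorderest}, \eqref{optimal}, \eqref{Slog} (together with Corollary \ref{sup} and the identification $H_{ess}=H_{ac}$) into \cite[Corollary 2 and Theorem 3]{NR} for the power case and \cite[Corollary 4 and Theorem 5]{KNR} for the logarithmic case, and then transports the result through the unitary equivalence $L = J(-\widetilde T)J^*$. The bookkeeping you flag is exactly what the paper treats as immediate, so your write-up is correct and essentially identical in route.
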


It is worth mentioning that in the three-dimensional problem with compactly supported $ c $ the characteristic function is analytic on the real axis \cite{KNR}, and has a simple zero at $ z = 0 $ when $ c \in \cE $. The appearance of the logarithmic case in Theorem 2 is specific for the slab problem.

\begin{remark} It is to be expected that there exist polynomial collision integrals $ K $ such that the operator has non-zero spectral singularities. Indeed, the reason for their absence in the isotropic case given by lemma \ref{propertiesQ} is no longer in force in the anisotropic case. \end{remark} 


\end{document}